\definecolor{greener}{RGB}{0,0,0}
\definecolor{reder}{RGB}{0,0,0}
\definecolor{bluer}{RGB}{0,0,0}
\mathchardef\Gamma="0100 \mathchardef\Delta="0101
\mathchardef\Theta="0102 \mathchardef\Lambda="0103
\mathchardef\Xi="0104 \mathchardef\Pi="0105
\mathchardef\Sigma="0106 \mathchardef\Upsilon="0107
\mathchardef\Phi="0108 \mathchardef\Psi="0109
\mathchardef\Omega="010A
\newcommand{\outline}[1]{}%{\textbf{#1}}
\newtheorem{thm}{Theorem}
\newcommand{\ie}{\emph{i.e.}\xspace}
\newcommand{\etal}{\frenchspacing{}\emph{et al{.}}\xspace}
\newcommand{\Comment}[1]{}
\newcommand{\presec}{\vspace{-0.0in}}
\newcommand{\postsec}{\vspace{-0.0in}}
\newcommand{\presub}{\vspace{-0.0in}}
\newcommand{\postsub}{\vspace{-0.0in}}
\newcommand{\prefigcaption}{\vspace{-0.05in}}
\newcommand{\postfig}{\vspace{-0.05in}}
\newcommand{\tabincell}[2]{\begin{tabular}{@{}#1@{}}#2\end{tabular}}
\newcommand{\Proc}{Proc. }
\title{A Shifting Bloom Filter Framework for Set Queries}
\author{
% You can go ahead and credit any number of authors here,
% e.g. one 'row of three' or two rows (consisting of one row of three
% and a second row of one, two or three).
%
% The command \alignauthor (no curly braces needed) should
% precede each author name, affiliation/snail-mail address and
% e-mail address. Additionally, tag each line of
% affiliation/address with \affaddr, and tag the
% e-mail address with \email.
%
% 1st. author
\alignauthor
\vspace{-0.3in}
Tong Yang\\
       \affaddr{Peking University, China}\\\vspace{-0.05in}
       \email{yangtongemail@gmail.com}
% 2nd. author
\alignauthor
\vspace{-0.3in}
Alex X. Liu\\
       \affaddr{Michigan State University}\\\vspace{-0.05in}
       \email{alexliu@cse.msu.edu}
% 3rd. author
\alignauthor
\vspace{-0.3in}
Muhammad Shahzad\\
       \affaddr{North\hspace{-0.005in} Carolina\hspace{-0.005in} State\hspace{-0.005in} University}\\\vspace{-0.05in}
       \email{mshahza@ncsu.edu}
% 4th. author
\and  % use '\and' if you need 'another row' of author names
\alignauthor Yuankun Zhong\\
       \affaddr{Nanjing University, China}\\\vspace{-0.05in}
       \email{kun@smail.nju.edu.cn}
% 5th. author
\alignauthor Qiaobin Fu\\
       \affaddr{Boston University, USA}\\\vspace{-0.05in}
       \email{qiaobinf@bu.edu}
% 6th. author
\alignauthor Zi Li\\
       \affaddr{Nanjing University, China}\\\vspace{-0.05in}
       \email{zizi13572468@gmail.com}
\and
\alignauthor Gaogang Xie\\
       \affaddr{ICT, CAS, China}\\\vspace{-0.05in}
       %\email{xie@ict.ac.cn}\vspace{-0.15in}
\alignauthor Xiaoming Li\\
       \affaddr{Peking University, China}\\\vspace{-0.05in}
       %\email{lxm@pku.edu.cn}\vspace{-0.15in}
\and
\vspace{0.1in}
 \textcolor{blue}{This paper will appear in VLDB 2016, see \url{http://www.vldb.org/pvldb/vol9.html}}
}
\begin{document}

\maketitle
\sloppy
\begin{abstract}
Set queries are fundamental operations in computer systems and applications.
This paper addresses the fundamental problem of designing a probabilistic data structure that can quickly process set queries using a small amount of memory.
We propose a Shifting Bloom Filter (ShBF) framework for representing and querying sets.
We demonstrate the effectiveness of ShBF using three types of popular set queries: membership, association, and multiplicity queries.
The key novelty of ShBF is on encoding the auxiliary information of a set element in a location offset.
In contrast, prior BF based set data structures allocate additional memory to store auxiliary information.
To evaluate ShBF in comparison with prior art, we conducted experiments using real-world network traces.
Results show that ShBF significantly advances the state-of-the-art on all three types of set queries.
\end{abstract}

\vspace{-0.08in}
\section{Introduction}
\vspace{-0.07in}
\subsection{Motivations}
%%Present scenarios where set queries are needed and what kinds of set queries are needed
Set queries, such as \emph{membership queries}, \emph{association queries}, and \emph{multiplicity queries}, are fundamental operations in computer systems and applications.
\emph{Membership queries check whether an element is a member of a given set}.
Network applications, such as IP lookup, packet classification, and regular expression matching, often involve membership queries.
%
%\emph{Association queries identify the set among multiple sets in which a given element is}.
{\color{reder}\emph{Association queries identify which set(s) among a pair of sets contain a given element}.
Network architectures such as distributed servers often use association queries.
For example, when data is stored distributively on two servers and the popular content is replicated over both servers to achieve load balancing, for any incoming query, the gateway needs to identify the server(s) that contain the data corresponding to that query.}
%
%The false negatives of queries must be avoided, and it would be better that the queries have no false positives.
%Network applications, such as web caching and MAC address matching, often use association queries.
%%
%For web caching, there are multiple cache servers and given a URL, we need to find which cache it is in.
%%
%For MAC address matching, there are multiple groups of MAC addresses, where each group corresponds to a port, and given a MAC address, we need to find which port it should be sent to.
%
\emph{Multiplicity queries check how many times an element appears in a multi-set}.
A multi-set allows elements to appear more than once.
Network measurement applications, such as measuring flow sizes, often use multiplicity queries.

%%Problem statement: we need a data structure to answer all these queries
This paper addresses the fundamental problem of designing a probabilistic data structure that can quickly process set queries, such as the above-mentioned membership, association, and multiplicity queries, using a small amount of memory.
Set query processing speed is critical for many systems and applications, especially for networking applications as packets need to be processed at wire speed.
Memory consumption is also critical because small memory consumption may allow the data structure to be stored in SRAM, which is an order of magnitude faster than DRAM.

Widely used set data structures are the standard Bloom Filter (BF) \cite{BF1970} and the counting Bloom Filter (CBF) \cite{webcaching}.
Let $h_1(.),\cdots,h_k(.)$ be $k$ independent hash functions with uniformly distributed outputs.
Given a set $S$, BF constructs an array $B$ of $m$ bits, where each bit is initialized to 0, and for each element $e$$\in$$S$, BF sets the $k$ bits $B[h_1(e)\%m], \cdots, B[h_k(e)\%m]$ to 1.
To process a membership query of whether element $e$ is in $S$, BF returns true if all corresponding $k$ bits are 1 (\ie, returns $\wedge_{i=1}^{k}B[h_i(e)\%m]$).
BF has no false negatives (FNs), \ie, it never says that $e$$\notin$$S$ when actually $e$ $\in$ $S$.
However, BF has false positives (FPs), \ie, it may say that $e\in S$ when actually $e\notin S$ with a certain probability.
Note that BF does not support element deletion.
CBF overcomes this shortcoming by replacing each bit in BF by a counter.
Given a set of elements, CBF first constructs an array $C$ of $m$ counters, where each counter is initialized to 0.
For each element $e$ in $S$, for each $1\leqslant i\leqslant k$, CBF increments $C[h_i(e)\%m]$ by 1.
To process a membership query of whether element $e$ is in set $S$, CBF returns true if all corresponding $k$ counters are at least 1 (\ie, returns $\wedge_{i=1}^{k}(C[h_i(e)\%m] \geqslant 1)$).
To delete an element $e$ from $S$, for each $1\leqslant i\leqslant k$, CBF decrements $C[h_i(e)\%m]$ by 1.

\vspace{-0.03in}
\subsection{Proposed Approach} \vspace{-0.03in}
\label{subsec:ProposedApproach}
In this paper, we propose a \emph{Shifting Bloom Filter} (ShBF) framework for representing and querying sets.
Let $h_1(.)$, $\cdots$, $h_k(.)$ be $k$ independent hash functions with uniformly distributed outputs.
In the construction phase, ShBF first constructs an array $B$ of $m$ bits, where each bit is initialized to 0.
We observe that in general a set data structure needs to store two types of information for each element $e$: (1) \emph{existence information}, \ie, whether $e$ is in a set, and (2) \emph{auxiliary information}, \ie, some additional information such as $e$'s counter (\ie, multiplicity) or which set that $e$ is in.
For each element $e$, we encode its existence information in $k$ hash values $h_1(e)\%m, \cdots, h_k(e)\%m$, and its auxiliary information in an offset $o(e)$.
Instead of, or in addition to, setting the $k$ bits at locations $h_1(e)\%m, \cdots, h_k(e)\%m$ to 1, we set the bits at locations $(h_1(e)+o(e))\%m, \cdots, (h_k(e)+o(e))\%m$ to 1.
\textcolor{bluer}{For different set queries, the offset has different values.
%
%\textcolor{reder}{???need to delete???For association queries, for each element $e$ in each set $S_j$, $o(e)$ is the set ID $j$ minus 1.
%%
%For multiplicity queries, for each element $e$, $o(e)$ is the counter $c(e)$ minus 1.}
%
In the query phase, to query an element $e$, we first calculate the following $k$ locations: $h_1(e)\%m, \cdots, h_k(e)\%m$.
Let $c$ be the maximum value of all offsets.
For each $1 \leqslant i \leqslant k$, we first read the $c$ bits $B[h_i(e)\%m], B[(h_i(e)+1)\%m], \cdots, B[(h_i(e)+c-1)\%m]$ and then calculate the existence and auxiliary information about $e$ by analyzing where 1s appear in these $c$ bits.
To minimize the number of memory accesses, we extend the number of bits in ShBF to $m+c$; thus, we need $k \lceil \frac{c}{w} \rceil$ number of memory accesses in the worst case, where $w$ is the word size.}
%
%For different set queries, the calculation methods are different.
%
Figure \ref{draw:model} illustrates our ShBF framework.

\begin{figure}[htbp]
\centering
\vspace{-0.05in}
\includegraphics[width=0.38\textwidth]{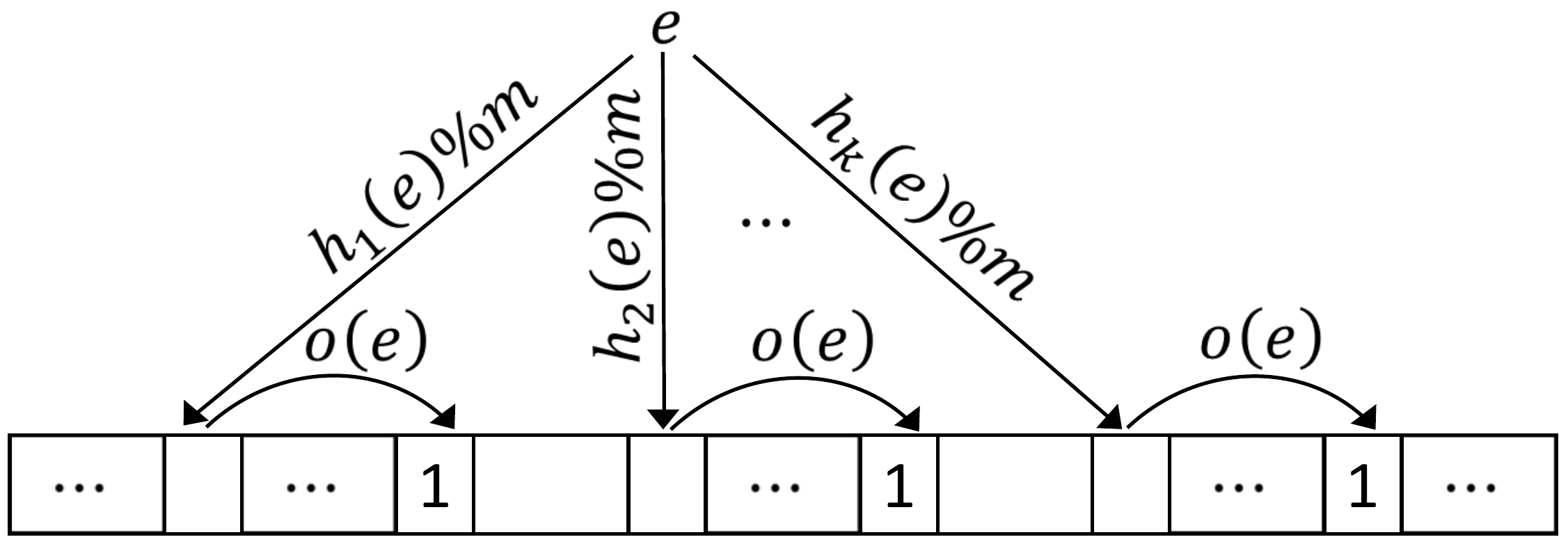}
%\prefigcaption
\vspace{-0.03in}
\caption{Shifting Bloom Filter framework.} \label{draw:model}
\vspace{-0.03in}
\end{figure}

We demonstrate the effectiveness of ShBF using three types of popular set queries: membership, association, and multiplicity queries.

\vspace{-0.06in}
%\textbf{Membership Queries:}
\subsubsection{Membership Queries}
Such queries only deal with the existence information of each element, which is encoded in $k$ random positions in array $B$.
To leverage our ShBF framework, we treat $k/2$ positions as the existence information and the other $k/2$ positions as the auxiliary information, assuming $k$ is an even number for simplicity.
Specifically, the offset function $o(.)=h_{\frac{k}{2}+1}(.)\%(\overline{w}-1)+1$, where $h_{\frac{k}{2}+1}(.)$ is another hash function with uniformly distributed outputs and  $\overline{w}$ is a function of machine word size $w$.
In the \textbf{construction phase}, for each element $e \in S$, we set both the $k/2$ bits $B[h_1(e)\%m], \cdots, B[h_{\frac{k}{2}}(e)\%m]$ and the $k/2$ bits $B[h_1(e)\%m+o(e)], \cdots, B[h_{\frac{k}{2}}(e)\%m+o(e)]$ to 1.
In the \textbf{query phase}, for an element $e$, if all these $k$ bits are 1, then we output $e \in S$; otherwise, we output $e \notin S$.
In terms of false positive rate (FPR), our analysis shows that ShBF is very close to BF with $k$ hash functions.
In terms of performance, ShBF is about two times faster than BF because of two main reasons.
First, ShBF reduces the computational cost by almost half because the number of hash functions that ShBF needs to compute is almost the half of what BF needs to compute.
Second, ShBF reduces the number of memory accesses by half because although both ShBF and BF write $k$ bits into the array $B$, when querying element $e$, by one memory access, ShBF obtains two bits about $e$ whereas BF obtains only one bit about $e$.

\vspace{-0.07in}
%\textbf{Association Queries:}
\subsubsection{Association Queries}
For this type of queries with two sets $S_1$ and $S_2$, for elements in $S_1\cup S_2$, there are three cases: (1) $e\in S_1-S_2$, (2) $e\in S_1\cap S_2$, and (3) $e\in S_2-S_1$.
For the first case, \ie, $e\in S_1-S_2$, the offset function $o(e)=0$.
For the second case, \ie, $e\in S_1\cap S_2$, the offset function $o(e)=o_1(e)=h_{k+1}(e) \%((\overline{w}-1)/2)+1$, where $h_{k+1}(.)$ is another hash function with uniformly distributed outputs and  $\overline{w}$ is a function of machine word size $w$.
For the third case, \ie, $e\in S_2-S_1$, the offset function $o(e)=o_2(e)=o_1(e)+h_{k+2}(e) \%((\overline{w}-1)/2)+1$, where $h_{k+2}(.)$ is yet another hash function with uniformly distributed outputs.
In the \textbf{construction phase}, for each element $e\in S_1\cup S_2$, we set the $k$ bits $B[h_1(e)\%m+o(e)], \cdots, B[h_k(e)\%m+o(e)]$ to 1 using an appropriate value of $o(e)$ as just described for the three cases.
In the \textbf{query phase}, given an element $e\in S_1\cup S_2$ , for each $1 \leqslant i \leqslant k$, we read the 3 bits $B[h_i(e)\%m]$, $B[h_i(e)\%m+o_1(e)]$, and $B[h_i(e)\%m+o_2(e)]$.
If all the $k$ bits $B[h_1(e)\%m], \cdots, B[h_k(e)\%m]$ are 1, then $e$ may belong to $S_1-S_2$.
If all the $k$ bits $B[h_1(e)\%m+o_1(e)], \cdots, B[h_k(e)\%m+o_1(e)]$ are 1, then $e$ may belong to $S_1 \cap S_2$.
If all the $k$ bits $B[h_1(e)\%m+o_2(e)], \cdots, B[h_k(e)\%m+o_2(e)]$ are 1, then $e$ may belong to $S_2-S_1$.
There are a few other possibilities that we will discuss later in Section \ref{sec:AssociationQueries}, that ShBF takes into account when answering the association queries.
In comparison, the standard BF based association query scheme, namely iBF, constructs a BF for each set.
In terms of accuracy, iBF is prone to false positives whenever it declares an element $e\in S_1\cup S_2$ in a query to be in $S_1\cap S_2$, whereas ShBF achieves an FPR of zero.
In terms of performance, ShBF is almost twice as fast as iBF because iBF needs $2k$ hash functions and $2k$ memory accesses per query, whereas ShBF needs only $k+2$ hash functions and $k$ memory accesses per query.
\vspace{-0.02in}
%\textbf{Multiplicity Queries:}
\subsubsection{Multiplicity Queries}
For multiplicity queries, for each element $e$ in a multi-set $S$, the offset function $o(.)=c(e)-1$ where $c(e)$ is $e$'s counter (\ie, the number of occurrences of $e$ in $S$).
In the \textbf{construction phase}, for each element $e$, we set the $k$ bits $B[h_1(e)\%m+c(e)-1], \cdots, B[h_k(e)\%m+c(e)-1]$ to 1.
In the \textbf{query phase}, for an element $e$, for each $1 \leqslant i \leqslant k$, we read the $c$ bits $B[h_i(e)\%m], B[h_i(e)\%m+1], \cdots, B[h_i(e)\%m+c-1]$, where $c$ is the maximum number of occurrences that an element can appear in $S$.
For these $ck$ bits, for each $1 \leqslant j \leqslant c$, if all the $k$ bits $B[h_1(e)\%m+j-1], \cdots, B[h_k(e)\%m+j-1]$ are 1, then we output $j$ as one possible value of $c(e)$.
Due to false positives, we may output multiple possible values.
% of $c(e)$.

\vspace{-0.02in}
\subsection{Novelty and Advantages over Prior Art}\vspace{-0.02in}
The key novelty of ShBF is on encoding the auxiliary information of a set element in its location by the use of offsets.
In contrast, prior BF based set data structures allocate additional memory to store such auxiliary information.

To evaluate our ShBF framework in comparison with prior art, we conducted experiments using real-world network traces.
Our results show that ShBF significantly advances the state-of-the-art on all three types of set queries: membership, association, and multiplicity.
For membership queries, in comparison with the standard BF, ShBF has about the same FPR but is about 2 times faster; in comparison with 1MemBF \cite{onemem}, which represents the state-of-the-art in membership query BFs, ShBF has 10\% $\sim$ 19\% lower FPR and $1.27\sim 1.44$ times faster query speed.
%
%
%For association queries, in comparison with iBF, ShBF has about the same FPR but is 5.9 $\sim$ 6.8 times faster; in comparison with Combinatorial BF \cite{combineBF}, which represents the state-of-the-art in association query BFs, ShBF has 16\% $\sim$ 68\% lower FPR and 3.3 $\sim$ 3.9 times faster speed.}
%
For association queries, in comparison with iBF, ShBF has 1.47 times higher probability of a clear answer, and has 1.4 times faster query speed.
%in comparison with Combinatorial BF \cite{combineBF}, which represents the state-of-the-art in association query BFs, ShBF has XX\% $\sim$ XX\% lower FPR and XX $\sim$ XX times faster speed.
%
For multiplicity queries, in comparison with Spectral BF \cite{spectralBF}, which represents the state-of-the-art in multiplicity query BFs, ShBF has 1.45 $\sim$ 1.62 times higher correctness rate and the query speeds are comparable.
%

%%
%\textit{Paper organization:} The rest of the paper proceeds as follows.
%%
%In Section \ref{sec:relatedwork}, we survey related work on set queries.
%%
%Sections \ref{sec:MembershipQueries}, \ref{sec:AssociationQueries}, and \ref{sec:MultiplicityQueries} illustrate our ShBF schemes for membership, association, and multiplicity queries, respectively.
%%
%We present experimental results in Section \ref{sec:evaluation}.
%%
%Finally, we conclude the paper in Section \ref{sec:conclusion}.

\section{Related Work} \label{sec:relatedwork}
We now review related work on the three types of set queries: membership, association, and multiplicity queries, which are mostly based on Bloom Filters.
Elaborate surveys of the work on Bloom Filters can be found in \cite{BFSurvey2012, BFSurvey9C, BFsurvey05, BFSurvey2004}.

\vspace{-0.03in}
\subsection{Membership Queries}
Prior work on membership queries focuses on optimizing BF in terms of the number of hash operations and the number of memory accesses.
Fan \etal proposed the Cuckoo filter and found that it is more efficient in terms of space and time compared to BF \cite{cuckoo}.
This improvement comes at the cost of non-negligible probability of failing when inserting an element. %., which is not acceptable for many applications.
To reduce the number of hash computation, Kirsch \etal proposed to use two hash functions $h_1(.)$ and $h_2(.)$ to simulate $k$ hash functions $(h_1(.)+i*h_2(.)) \%m$, where $(1 \leqslant i\leqslant k)$; but the cost is increased FPR \cite{lessHashBF}.
To reduce the number of memory accesses, Qiao \etal proposed to confine the output of the $k$ hash functions within certain number of machine \texttt{word}s, which reduces the number of memory accesses during membership queries; but the cost again is increased FPR \cite{onemem}.
In contrast, ShBF reduces the number of hash operations and memory access by about half while keeping FPR about the same as BF.

\vspace{-0.03in}
\subsection{Association Queries}
Prior work on association queries focuses on identifying the set, among a group of pair-wise disjoint sets, to which an element belongs.
A straightforward solution is iBF, which builds one BF for each set.
To query an element, iBF generates a membership query for each set's BF and finds out which set(s) the unknown element is in.
This solution is used in the Summary-Cache Enhanced ICP protocol \cite{webcaching}.
Other notable solutions include kBF \cite{kbf}, Bloomtree \cite{bloomtree}, Bloomier \cite{bloomier}, Coded BF \cite{codedBF}, Combinatorial BF \cite{combineBF}, SVBF \cite{SVBF}.
A common shortcoming of all existing schemes is that if any pair of sets in the group of sets is not disjoint, these schemes do not function correctly.
%
%when some sets have intersection, there will be consecutive $1s$, and the false positive rate will degrade and become hard to compute.
%
In contrast, ShBF does not require the sets to be disjoint.

\vspace{-0.03in}
\subsection{Multiplicity Queries}
BF cannot process multiplicity queries because it only tells whether an element is in a set.
Spectral BF, which was proposed by Cohen and Matias, represents the state-of-the-art scheme for multiplicity queries \cite{spectralBF}.
There are three versions of Spectral BF.
The first version proposes some modifications to CBF to record the multiplicities of elements.
The second version increases only the counter with the minimum value when inserting an element.
This version reduces FPR at the cost of not supporting updates.
The third version minimizes space for counters with a secondary spectral BF and auxiliary tables, which makes querying and updating procedures time consuming and more complex.
Aguilar-Saborit \etal proposed Dynamic Count Filters (DCF), which combines the ideas of spectral BF and CBF, for multiplicity queries \cite{Dcountingfilter}.
DCF uses two filters: the first filter uses fixed size counters and the second filter dynamically adjusts counter sizes.
The use of two filters degrades query performance.
\textcolor{bluer}{Another well-known scheme for multiplicity queries is the Count-Min (CM) Sketch \cite{CMsketch}.
We will describe the details of CM sketch and how our ShBF framework can be applied to CM sketch in Section \ref{sec:CMsketch}.}

\vspace{0.05in}
\section{Membership Queries} \label{sec:MembershipQueries} \postsec
In this section, we first present the construction and query phases of ShBF for membership queries.
Membership queries are the ``traditional'' use of a BF.
We use ShBF$_\text{M}$ to denote the ShBF scheme for membership queries.
Second, we describe the updating method of ShBF$_\text{M}$.
Third, we derive the FPR formula of ShBF$_\text{M}$.
Fourth, we compare the performance of ShBF$_\text{M}$ with that of BF.
Last, we present a generalization of ShBF$_\text{M}$.
Table \ref{table:symbols} summarizes the symbols and abbreviations used in this paper.

\vspace{-0.2in}
\begin{table}[htbp]
\centering\caption{Symbols \& abbreviations used in the paper}
\begin{tabular}{|c|l|}
\hline
\textbf{Symbol}&\textbf{Description}\\
\hline
$m$& size of a Bloom Filter\\
\hline
$n$& \# of elements of a Bloom Filter\\
\hline
$k$& \# of hash functions of a Bloom Filter\\
\hline
$k_{opt}$& the optimal value of $k$\\
\hline
$S$& a set\\
\hline
$e$& one element of a set\\
\hline
$u$& one element of a set\\
\hline
$h_i(s)$& the $i$-th hash function\\
\hline
FP& false positive\\
\hline
FPR& false positive rate\\
\hline
$f$& the FP rate of a Bloom Filter\\
\hline
\multirow{2}{*}{$p'$}&the probability that one bit is still 0   \\& after inserting all elements into BF \\
\hline
BF& standard Bloom Filter\\
\hline
iBF& \tabincell{l}{individual BF: the solution that builds one \\individual BF per set}\\
\hline
ShBF& Shifting Bloom Filters\\
\hline
ShBF$_\text{M}$& Shifting Bloom Filters for membership qrs.\\
\hline
ShBF$_\text{A}$& Shifting Bloom Filters for association qrs.\\
\hline
%ShBF$_\text{A}^r$& \tabincell{l}{Shifting Bloom Filters using rotating strategy\\ for association queries}\\
%\hline
%ShBF$_\text{A}^{nr}$& \tabincell{l}{Shifting Bloom Filters using non-rotating \\ strategy for association queries}\\
%\hline
ShBF$_\times$& Shifting Bloom Filters for multiplicities qrs.\\
\hline
Qps & queries per second\\
\hline
multi-set &  \tabincell{l}{a generalization of the notion of a set in\\which members can appear more than once}\\
\hline
$o(.)$& \tabincell{l}{offset(.), referring to the offset value for a \\ given input}\\
\hline
$w$& \# of bits in a machine word\\
\hline
$\overline{w}$& \tabincell{l}{the maximum value of offset(.) for \\ membership query of a single set}\\
\hline
%$g$ & the number of sets for association queries\\
%\hline
$c$ & \tabincell{l}{the maximum number of times \\ an element can occur in a multi-set}\\
\hline
\end{tabular}
\label{table:symbols}
\vspace{-0.15in}
\end{table}

\subsection{ShBF$_\textbf{M}$ -- Construction Phase} \postsub\label{subsec:ShBFMConstructionPhase}
The construction phase of ShBF$_\text{M}$ proceeds in three steps.
Let $h_1(.), h_2(.), \cdots, h_{\frac{k}{2}+1}(.)$ be $\frac{k}{2}+1$ independent hash functions with uniformly distributed outputs.
First, we construct an array $B$ of $m$ bits, where each bit is initialized to 0.
Second, to store the existence information of an element $e$ of set $S$, we calculate $\frac{k}{2}$ hash values $h_1(e)\%m$,$h_2(e)\%m$,$\cdots$,$h_\frac{k}{2}(e)\%m$.
To leverage our ShBF framework, we also calculate the offset values for the element $e$ of set $S$ as the auxiliary information for each element, namely $o(e)=h_{\frac{k}{2}+1}(e)\hspace{0.02in}\%\hspace{0.02in}(\overline{w}-1)+1$.
We will later discuss how to choose an appropriate value for $\overline{w}$.
Third, we set the $\frac{k}{2}$ bits $B[h_1(e)\%m]$, $\cdots$, $B[h_\frac{k}{2}(e)\%m]$ to 1 and the other $\frac{k}{2}$ bits $B[h_1(e)\%m+o(e)], \cdots, B[h_\frac{k}{2}(e)\%m+o(e)]$ to 1.
Note that $o(e)\neq 0$ because if $o(e)= 0$, the two bits $B[h_i(e)\%m]$ and $B[h_i(e)\%m+o(e)]$ are the same bits for any value of $i$ in the range $1 \leqslant i \leqslant \frac{k}{2}$.
For the construction phase, the maximum number of hash operations is $\frac{k}{2}+1$.
Figure \ref{draw:ShBF_S} illustrates the construction phase of ShBF$_\text{M}$.

\begin{figure}[htbp]
\centering
\includegraphics[width=0.38\textwidth]{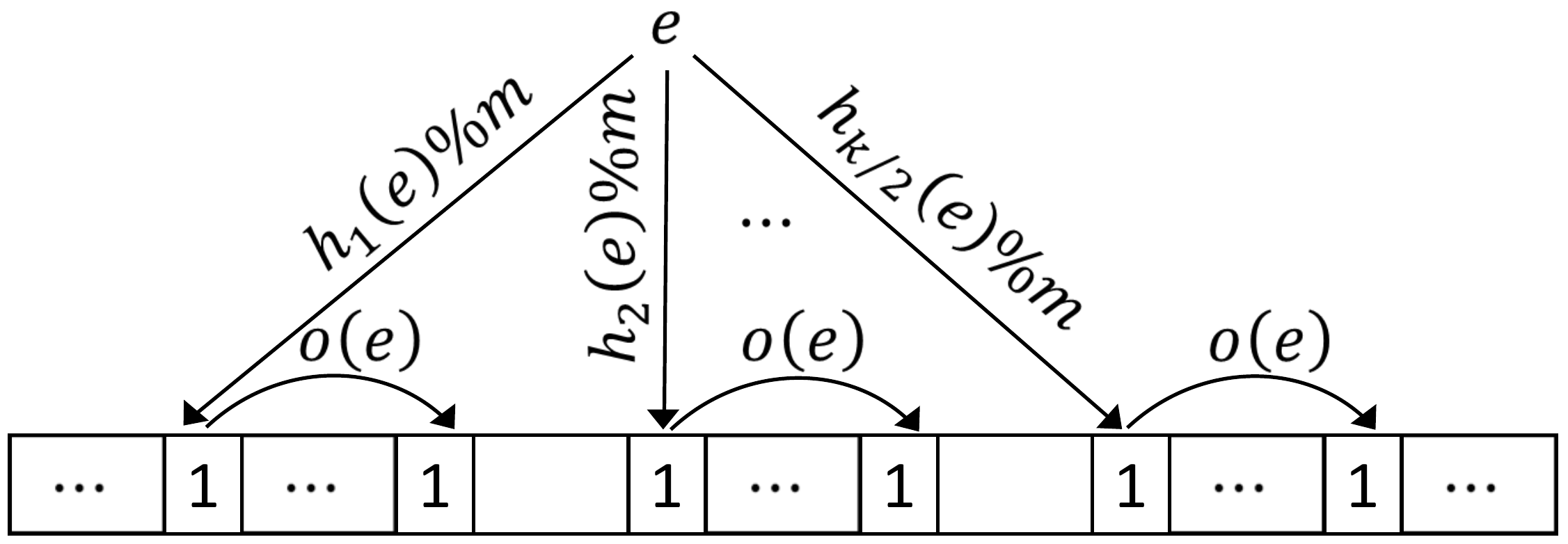}
\prefigcaption
\caption{Illustration of ShBF$_\text{M}$ construction phase.}
\label{draw:ShBF_S}
\postfig
\end{figure}

We now discuss how to choose a proper value for $\overline{w}$ so that for any $1 \leqslant i \leqslant \frac{k}{2}$, we can access both bits $B[h_i(e)\%m]$ and $B[h_i(e)\%m+o(e)]$ in one memory access.
Note that modern architecture like x86 platform CPU can access data starting at any byte, \ie, can access data aligned on any boundary, not just on word boundaries.
Let $B[h_i(e)\%m]$ be the $j$-th bits of a byte where $1 \leqslant j \leqslant 8$.
To access bit $B[h_i(e)\%m]$, we always need to read the $j-1$ bits before it.
To access both bits $B[h_i(e)\%m]$ and $B[h_i(e)\%m+o(e)]$ in one memory access, we need to access $j-1+\overline{w}$ bits in one memory access.
Thus, $j-1+\overline{w} \leqslant w$, which means $\overline{w} \leqslant w+1-j$.
When $j=8$, $w+1-j$ has the minimum value of $w-7$.
Thus, we choose $\overline{w} \leqslant w-7$ as it guarantees that we can read both bits $B[h_i(e)\%m]$ and $B[h_i(e)\%m+o(e)]$ in one memory access.

\presub
\subsection{ShBF$_\textbf{M}$ -- Query Phase} \postsub
Given a query $e$, we first read the two bits $B[h_1(e)\%m]$ and $B[h_1(e)\%m+o(e)]$ in one memory access.
If both bits are 1, then we continue to read the next two bits $B[h_2(e)\%m]$ and $B[h_2(e)\%m+o(e)]$ in one memory access; otherwise we output that $e \notin S$ and the query process terminates.
If for all $1\leqslant i \leqslant \frac{k}{2}$, $B[h_i(e)\%m]$ and $B[h_i(e)\%m+o(e)]$ are 1, then we output $e \in S$.
For the query phase, the maximum number of memory accesses is $\frac{k}{2}$.

\presub
\subsection{ShBF$_\textbf{M}$ -- Updating} \postsub
Just like BF handles updates by replacing each bit by a counter, we can extend ShBF$_\textbf{M}$ to handle updates by replacing each bit by a counter.
We use CShBF$_\text{M}$ to denote this counting version of ShBF$_\textbf{M}$.
Let $C$ denote the array of $m$ counters.
To insert an element $e$, instead of setting $k$ bits to 1, we increment each of the corresponding $k$ counters by 1; that is, we increment both $C[h_i(e)\%m]$ and $C[h_i(e)\%m+o(e)]$ by 1 for all $1 \leqslant i \leqslant \frac{k}{2}$.
To delete an element $e\in S$, we decrement both $C[h_i(e)\%m]$ and $C[h_i(e)\%m+o(e)]$ by 1 for all $1 \leqslant i \leqslant \frac{k}{2}$.
%
%Similar to the analysis above, if we choose $\overline{w} \leqslant \lfloor \frac{w-7}{z} \rfloor$ where $z$ is the number of bits for each counter, we can guarantee to access both $C[h_i(e)\%m]$ and $C[h_i(e)\%m+o(e)]$ in one memory access for all $1 \leqslant i \leqslant \frac{k}{2}$.
{\color{bluer} In most applications, 4 bits for a counter are enough. Therefore, we can further reduce the number of memory accesses for updating CShBF$_\text{M}$.} Similar to the analysis above, if we choose $\overline{w} \leqslant \lfloor \frac{w-7}{z} \rfloor$ where $z$ is the number of bits for each counter, we can guarantee to access both $C[h_i(e)\%m]$ and $C[h_i(e)\%m+o(e)]$ in one memory access.
{\color{bluer} Consequently, one update of CShBF$_\text{M}$ needs only $k/2$ memory accesses.
}

Due to the replacement of bits by counters, array $C$ in CShBF$_\text{M}$ uses much more memory than array $B$ in ShBF$_\textbf{M}$.
To have the benefits of both fast query processing and small memory consumption, we can maintain both ShBF$_\textbf{M}$ and CShBF$_\text{M}$, but store array $B$ in fast SRAM and array $C$ in DRAM.
Note that SRAM is at least an order of magnitude faster than DRAM.
Array $B$ in fast SRAM is for processing queries and array $C$ in slow DRAM is only for updating.
After each update, we synchronize array $C$ with array $B$.
The synchronization is quite straightforward: when we insert an element, we insert it to both array $C$ and $B$; when we delete an element, we first delete it from $C$, if there is at least one of the $k$ counters becomes 0, we clear the corresponding bit in $B$ to 0.

\presub
\subsection{ShBF$_\textbf{M}$ -- Analysis} \postsub
We now calculate the FPR of ShBF$_\text{M}$, denoted as $f_{\text{ShBF}_\text{M}}$.
Then, we calculate the minimum value of $\overline{w}$ so that ShBF$_\text{M}$ can achieve almost the same FPR as BF.
Last, we calculate the optimum value of $k$ that minimizes $f_{\text{ShBF}_\text{M}}$.

\presub
\subsubsection{False Positive Rate}
We calculate the false positive rate of ShBF$_\text{M}$ in the following theorem.
\begin{thm}\label{thm:pfpShBFM}
The FPR of ShBF$_\text{M}$ for a set of $n$ elements is calculated as follows:
\begin{equation}
\label{equ:pfpShBFM}
f_{\emph{ShBF}_\text{M}} \approx (1-p)^\frac{k}{2} \left( 1-p +\dfrac{1}{\overline{w}-1} p^2 \right)^\frac{k}{2}
\end{equation}
where $p=e^{\frac{-nk}{m}}$.
\end{thm}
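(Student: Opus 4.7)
The plan is to derive $f_{\text{ShBF}_\text{M}}$ by computing the probability that, for a query element $u \notin S$, all $k$ bits inspected in the query phase are equal to $1$, and by exploiting the approximate independence across the $k/2$ hash functions $h_1, \ldots, h_{k/2}$.

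First, I would observe that a false positive occurs exactly when, for every $1 \leqslant i \leqslant \frac{k}{2}$, both bits $B[h_i(u)\%m]$ and $B[h_i(u)\%m + o(u)]$ equal $1$. Because the $k/2$ hash functions are independent and uniformly distributed, the pairs of queried positions can be treated as essentially independent across $i$, so that $f_{\text{ShBF}_\text{M}} \approx P(B[j]=1)^{k/2} \cdot P(B[j+o]=1 \mid B[j]=1)^{k/2}$, where $j$ denotes a uniformly chosen location and $o = o(u)$ denotes the query offset.

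Second, I would compute the marginal $P(B[j]=1)$. Each of the $n$ elements of $S$ is responsible for $k$ bit-setting operations (two per index for each of $\frac{k}{2}$ indices), so the structure absorbs $nk$ nearly-uniform writes into $m$ bits. The standard balls-and-bins estimate then gives $P(B[j]=0) \approx (1 - 1/m)^{nk} \approx e^{-nk/m} = p$, yielding the first factor $(1-p)^{k/2}$.

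Third, I would compute the conditional factor $P(B[j+o]=1 \mid B[j]=1)$ by accounting for the correlation that the shifting scheme introduces between the two bits of a pair. The key observation is that a single insertion of some $e \in S$ with hash $X = h_l(e)\%m$ and offset $O = o(e)$ writes both $B[X]$ and $B[X+O]$; hence whenever $X = j$ and $O = o$, a single insertion simultaneously sets both bits of the query pair. I would decompose by whether $o(e)$ coincides with the query offset $o(u)$: a coincidence occurs with probability $1/(\overline{w}-1)$ per element and is the only mechanism that correlates the two bits, while the non-coincidence case leaves them effectively independent. Combining both cases and retaining only leading-order terms in $1/m$ via the approximation $(1-x)^N \approx e^{-Nx}$, I expect the conditional probability to reduce to $1 - p + \frac{1}{\overline{w}-1}\,p^2$, which produces the second factor and completes the proof.

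The main obstacle is the third step. One must carefully isolate the single contribution arising from offset coincidence and verify that other potential sources of correlation, such as two separate insertions interacting through shared positions, are of lower order in $1/m$, so that the compact additive correction $\frac{p^2}{\overline{w}-1}$ really captures the inflation of the conditional probability to leading order. The first two steps reduce to routine balls-and-bins computations.
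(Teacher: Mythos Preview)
Your overall plan---factor the FPR as a product over the $k/2$ hash pairs, take the marginal $P(B[j]=1)\approx 1-p$, and then compute the conditional $P(B[j+o]=1\mid B[j]=1)$---matches the paper's structure, and your second step is fine. The third step, however, uses a genuinely different decomposition. The paper does \emph{not} condition on whether an inserted element's offset coincides with the query offset $o(u)$. Instead it conditions on the \emph{mechanism} that made the bit at $h_i(u)$ equal to~$1$: either (event $X1$) that bit is the offset image of one of the $\overline{w}-1$ positions to its left, or (event $X2$) it was hit as a primary hash value $h_j(e)$. Under $X1$ the paper treats $X$ and $Y$ as independent, so $P(Y\mid X,X1)=1-p'$; under $X2$, the same insertion that set position $h_i(u)$ necessarily also set exactly one of the $\overline{w}-1$ positions to its right, and this is $h_i(u)+o(u)$ with probability $1/(\overline{w}-1)$, giving $P(Y\mid X,X2)=\frac{1}{\overline{w}-1}+\frac{\overline{w}-2}{\overline{w}-1}(1-p')$. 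With the heuristic weights $P(X1\mid X)\approx 1-p'$ and $P(X2\mid X)\approx p'$, one obtains $P(Y\mid X)=(1-p')^2+p'\bigl(1-\tfrac{\overline{w}-2}{\overline{w}-1}p'\bigr)=1-p'+\tfrac{p'^2}{\overline{w}-1}$, which is exactly the second factor.

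Your offset-coincidence decomposition is a reasonable alternative, but be careful with the expectation that it ``reduces to $1-p+\tfrac{p^2}{\overline{w}-1}$.'' If you run the inclusion--exclusion you sketch, a single write-pair avoids both $j$ and $j+o$ with probability $1-\tfrac{4}{m}+\tfrac{1}{m(\overline{w}-1)}$ (three forbidden primary positions when $O=o$, four otherwise), so $P(\bar A\cap\bar B)\approx p^{2}\,e^{\,nk/(2m(\overline{w}-1))}$ and the resulting correction is $\approx\frac{-p^2\ln p}{2(\overline{w}-1)}$, not $\frac{p^2(1-p)}{\overline{w}-1}$. Both are first-order heuristics in $1/(\overline{w}-1)$ and are numerically close for practical parameters, but they are not identical; the paper's specific closed form comes from its left/right ``source of the 1'' decomposition, not from yours. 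So either adopt the paper's conditioning to recover the stated formula exactly, or present your derivation as an alternative leading-order estimate rather than claiming it collapses to the same expression.
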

\begin{proof}
Let $p'$ represent the probability that one bit (suppose it is at position $i$) in the filter $B$ is still 0 after inserting information of all $n$ elements.
For an arbitrary element $e$, if $h_i(e)\%m$ does not point to $i$ or $i-o(e)$, where $o(e)=h_\frac{k}{2}(e)\%(\overline{w}-1)+1$, then the bit at position $i$ will still be 0, thus $p'$ is given by the following equation.
\begin{equation}
\vspace{-0.05in}
p'=\left(\dfrac{m-2}{m}\right)^{\dfrac{kn}{2}}=\left(1-\dfrac{2}{m}\right)^{\dfrac{kn}{2}}
\label{valueofp'}
\end{equation}
When $m$ is large, we can use the identity $\sum\limits_{x}^{\infty}\left( 1-\dfrac{1}{x} \right)^{-x}=e$, to get the following equation for $p'$.
\begin{equation}
\vspace{-0.05in}
p'=\left(1-\dfrac{2}{m}\right)^{\dfrac{kn}{2}}=\left( \left(1-\dfrac{2}{m}\right)^{\dfrac{m}{2}}\right)^{\dfrac{kn}{m}}
\approx e^{-\dfrac{nk}{m}}
\label{valueofp}
\end{equation}

Let $X$ and $Y$ be the random variables for the event that the bit at position $h_i(.)$ and the bit at position $h_i(.) + h_{\frac{k}{2}+1}(.)$ is 1, respectively.
Thus, $P\left\{X\right\}=1-p'$.
Suppose we look at a hash pair $\langle h_i, h_{\frac{k}{2}}+1 \rangle$, we want to calculate $P\left\{XY\right\}$.
As $P\left\{XY\right\}=P\left\{X\right\}\times P\left\{Y|X\right\}$, next we calculate $P\left\{Y|X\right\}$.
There are $\overline{w}-1$ bits on the left side of position $h_i$.
The $1s$ in these $\overline{w}-1$ bits could be due to the first hash function in a pair and/or due to the second hash function in the pair .
In other words, event $X$ happens because a hash pair $\langle h_j$, $h_{\frac{k}{2}+1}\rangle $ sets the position $h_i$ to 1 during the construction phase.
When event $X$ happens, there are two cases:
\begin{enumerate}
\item The event $X1$ happens, \ie, the position $h_i$ is set to 1 by $h_{\frac{k}{2}+1}$, \ie, the left $\overline{w}-1$ bits cause $h_i$ to be 1, making $X$ and $Y$ independent. Thus, in this case $P\left\{Y\right\}=1-p'$.
\item The event $X2$ happens, \ie, the position $h_i$ is set to 1 by $h_j$. In this case,
As $P\left\{X1\right\}$ + $P\left\{X2\right\}$ =1, thus, $P\left\{Y|X\right\}=P\left\{Y|X, X1\right\}\times P\left\{X1\right\} + P\left\{Y|X, X2\right\}\times P\left\{X2\right\}$.
\end{enumerate}
Next, we compute $P\left\{X1\right\}$ and $P\left\{X2\right\}$.

As there are $\overline{w}-1$ bits on the left side of position $h_i$, there are $\overline{w}-1$ combinations, \ie, $\binom{\overline{w}-1}{1}=\overline{w}-1$.
Probability that any bit of the $\overline{w}-1$ bits is 1 is $1-p'$.
When one bit in the $\overline{w}-1$ bits is 1, probability that this bit sets the bit at location $h_i$ using the hash function $h_{\frac{k}{2}}+1$ to 1 is $\frac{1}{\overline{w}-1}$.
Therefore, $P\left\{X1\right\}=\binom{\overline{w}-1}{1} \times (1-p') \times \frac{1}{\overline{w}-1} = 1-p'$.
Consequently, $P\left\{X2\right\} = 1- P\left\{X1\right\}=p'$.
Again there are two cases:
\begin{enumerate}
\item If the bit which $h_{i}(x)$ points to is set to 1 by the left $1s$, X and Y are independent, and thus $P\left\{Y\right\}=\binom{\overline{w}-1}{1}\times (1-p')\times\frac{1}{\overline{w}-1}=1-p'$.

\item If the bit which $h_{i}(x)$ points to is not set to 1 by the left $1s$, then it must set one bit of the latter $\overline{w}-1$ bits to be 1. This case will cause one bit of the latter $\overline{w}-1$ bits after position $h_i$ to be 1. In this case, there are following two situations for the second hashing $h_i+h_{\frac{k}{2}+1}$:
    \begin{enumerate}
    \item when the second hash points to this bit, the probability is $\frac{1}{\overline{w}-1}\times1$;
    \item otherwise, the probability is   $(1-\frac{1}{\overline{w}-1})\times(1-p')$.
    \end{enumerate}
\end{enumerate}
When the second case above happens, $P\left\{Y|X, X2\right\}$ is given by the following equation.
\begin{equation}
P\left\{Y|X, X2\right\}=\dfrac{(1-p')(\overline{w}-2)}{\overline{w}-1}+\dfrac{1}{\overline{w}-1}=\left(1-\dfrac{\overline{w}-2}{\overline{w}-1}p'\right)
\end{equation}
Integrating the two cases, we can compute $P\left\{Y|X\right\}$ as follows.
\begin{equation}
\label{hi:k:2}
\begin{aligned}
P\left\{Y|X\right\} = (1-p')(1-p')+ \left(1-(1-p') \right)        \left(1-\dfrac{\overline{w}-2}{\overline{w}-1}p'\right)
\end{aligned}
\end{equation}

The probability that all the first hashes point to bits that are 1 is $(1-p')^{\frac{k}{2}}$.
The probability that the second hash points to a bit that is 1 is the $\frac{k}{2}$-th power of Equation \eqref{hi:k:2}.
Thus, the overall FPR of ShBF$_\text{M}$ is given by the following equation.
\vspace{-0.1in}
\begin{equation}
\begin{aligned}
f_{\text{ShBF}_\text{M}}=
&(1-p')^\frac{k}{2}  \left(  (1-p')(1-p')+p'\left(1-\dfrac{\overline{w}-2}{\overline{w}-1}p'\right)  \right)^\frac{k}{2}\\
=&(1-p')^\frac{k}{2}  \left( 1-p' +\dfrac{1}{\overline{w}-1}  p'^2 \right)^\frac{k}{2}  \\
\end{aligned}
\end{equation}

Note that when $\overline{w}\rightarrow\infty$, this formula becomes the formula of the FPR of BF.

Let we represent $e^{-\dfrac{nk}{m}}$ by $p$.
Thus, according to equation \ref{valueofp}, $p'\approx p$.
Consequently, we get:
\begin{equation*}
\label{fpofShBF_S}
\begin{aligned}
f_{\text{ShBF}_\text{M}}
\approx (1-p)^\frac{k}{2}  \left(  1-p +\dfrac{1}{\overline{w}-1}  p^2 \right)^\frac{k}{2}\qed
\end{aligned}
\end{equation*}
\end{proof}

Note that the above calculation of FPRs is based on the original Bloom's FPR formula \cite{BF1970}.
In 2008, Bose \etal pointed out that Bloom's formula \cite{BF1970} is slightly flawed and gave a new FPR formula \cite{bose2008false}.
Specifically, Bose \etal explained that the second independence assumption needed to derive $f_{Bloom}$ is too strong and does not hold in general, resulting in an underestimation of the FPR.
In 2010, Christensen \etal further pointed out that Bose's formula is also slightly flawed and gave another FPR formula \cite{ken2010false}.
Although Christensen's formula is \textcolor{bluer}{final}, it cannot be used to compute the optimal value of $k$, which makes the FPR formula practically not much useful.
Although Bloom's formula underestimates the FPR, both studies pointed out that the error of Bloom's formula is negligible.
Therefore, our calculation of FPRs is still based on Bloom's formula.

\vspace{-0.05in}
\presub
\subsubsection{Optimizing System Parameters}
\textbf{Minimum Value of $\overline{w}$:}
Recall that we proposed to use $\overline{w} \leqslant w-7$.
According to this inequation, $\overline{w} \leqslant 25$ for 32-bit architectures and $\overline{w} \leqslant 57$ for 64-bit architectures.
Next, we investigate the minimum value of $\overline{w}$ for ShBF$_\text{M}$ to achieve the same FPR with BF.
We plot $f_{\text{ShBF}_\text{M}}$ of ShBF$_\text{M}$ as a function of $\overline{w}$ in Figures \ref{fig:fp_w_k} and \ref{fig:fp_w_m}.
Figure \ref{fig:fp_w_k} plots $f_{\text{ShBF}_\text{M}}$ vs. $\overline{w}$ for $n=10000$, $m=100000$, and $k=$4, 8, and $12$ and Figure \ref{fig:fp_w_m} plots $f_{\text{ShBF}_\text{M}}$ vs. $\overline{w}$ for $n=10000$, $k=10$, and $m=100000, 110000,$ and $120000$.
The horizontal solid lines in these two figures plot the FPR of BF.
From these two figures, we observe that when $\overline{w}>20$, the FPR of ShBF$_\text{M}$ becomes almost equal to the FPR of BF.
Therefore, to achieve similar FPR as of BF, $\overline{w}$ needs to be larger than 20.
Thus, by using $\overline{w}=25$ for $32-$bit and $\overline{w}=57$ for $64-$bit architecture, ShBF$_\text{M}$ will achieve almost the same FPR as BF.
%
%\vspace{0.05in}
\begin{figure}[htbp]
\centering
\hspace{-0.11in}
\subfigure[$m=100000$, $n=10000$]{
{\includegraphics[width=0.48\columnwidth]{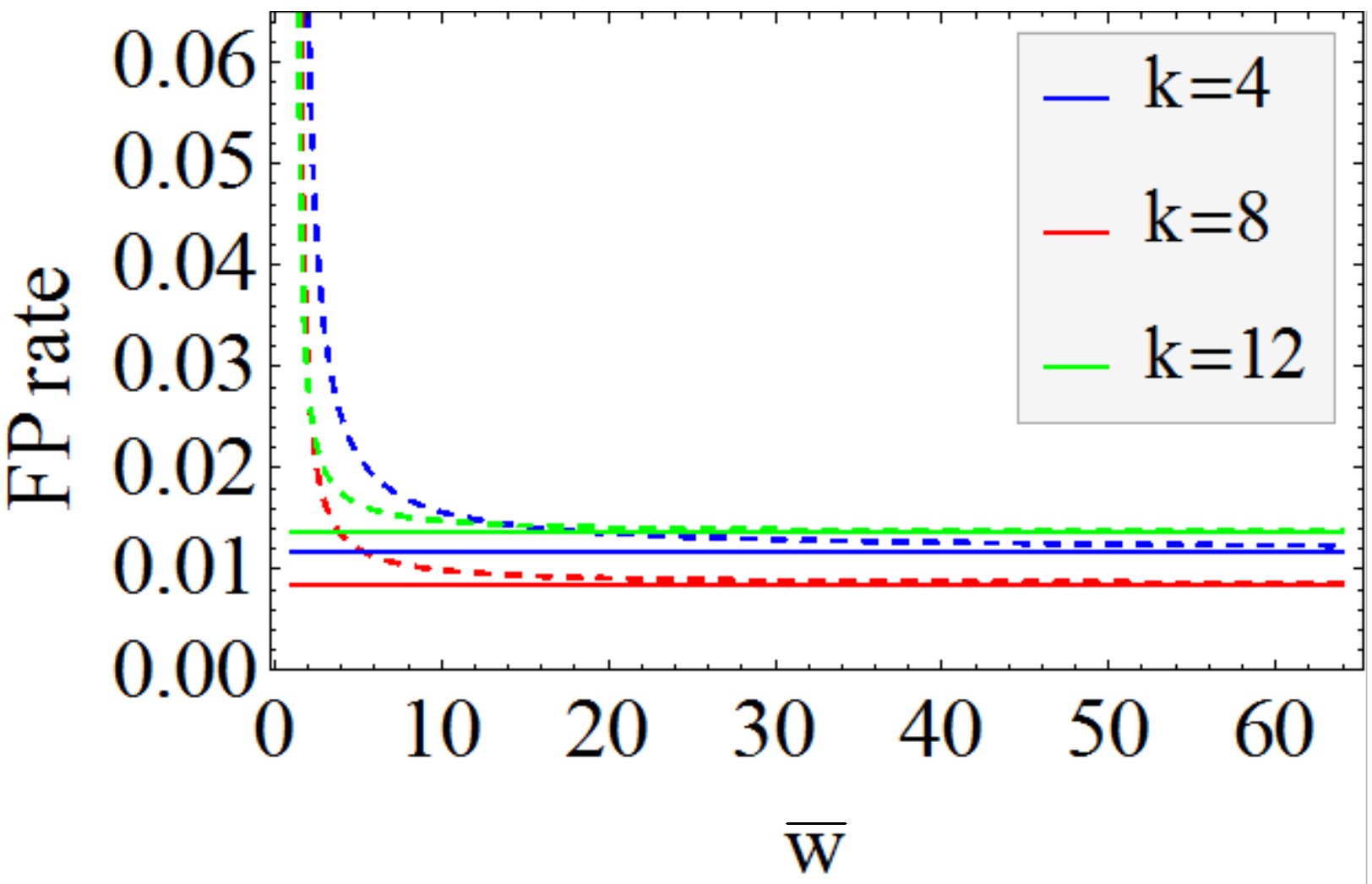}}
\label{fig:fp_w_k}}
\hspace{-0.11in}
\subfigure[$k=10$, $n=10000$]{
{\includegraphics[width=0.48\columnwidth]{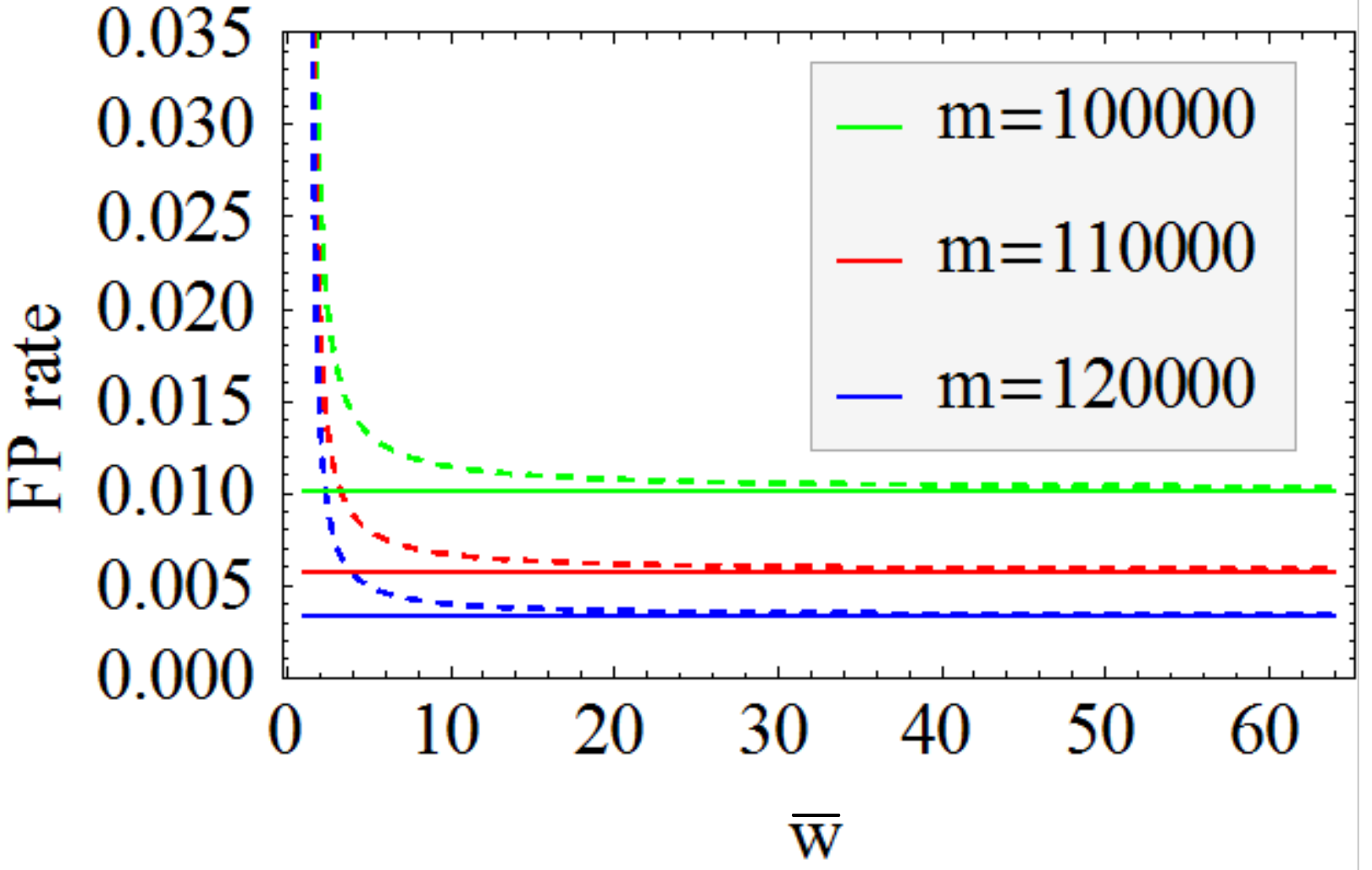}}
\label{fig:fp_w_m}}
\prefigcaption
\caption{FPR vs. $\overline{w}$.}
\label{fig:}
\postfig
\end{figure}

\textbf{Optimum Value of $k$:}
Now we calculate the value of $k$ that minimizes the FPR calculated in Equation \eqref{equ:pfpShBFM}.
The standard method to obtain the optimal value of $k$ is to differentiate Equation \eqref{equ:pfpShBFM} with respect to $k$, equate it to 0, \ie, $\frac{\partial}{\partial k}f_{\text{ShBF}_\text{M}}=0$, and solve this equation for $k$.
Unfortunately, this method does not yield a closed form solution for $k$.
Thus, we use standard numerical methods to solve the equation $\frac{\partial}{\partial k}f_{\text{ShBF}_\text{M}}=0$ to get the optimal value of $k$ for given values of $m,n,$ and $\overline{w}$.
For $\overline{w}=57$, differentiating Equation \eqref{equ:pfpShBFM} with respect to $k$ and solving for $k$ results in the following optimum value of $k$.
\begin{equation*}
\vspace{0.05in}
k_{\text{opt}}=0.7009\frac{m}{n}
\vspace{0.05in}
\end{equation*}

Substituting the value of $k_{\text{opt}}$ from the equation above into Equation \eqref{equ:pfpShBFM}, the minimum value of $f_{\text{ShBF}_\text{M}}$ is given by the following equation.
\begin{equation}
\label{equ:fminShBFM}
\vspace{0.05in}
f^{\text{min}}_{\text{ShBF}_\text{M}}=0.6204 ^ {\frac{m}{n}}
\vspace{0.05in}
\end{equation}

\presub
\subsection{Comparison of ShBF$_\text{M}$ FPR with BF FPR} \postsub
Our theoretical comparison of ShBF$_\text{M}$ and BF shows that the FPR of ShBF$_\text{M}$ is almost the same as that of BF.
Figure \ref{eva_fp_k} plots FPRs of ShBF$_\text{M}$ and  BF using Equations \eqref{equ:pfpShBFM} and \eqref{for:BF_FP}, respectively for $m=100000$ and $n=4000$, 6000, 8000, 10000, 12000.
The dashed lines in the figure correspond to ShBF$_\text{M}$ whereas the solid lines correspond to BF.
We observe from this figure that the sacrificed FPR of ShBF$_\text{M}$ in comparison with the FPR of  BF is negligible,  while the number of memory accesses and hash computations of ShBF$_\text{M}$ are half in comparison with  BF.
\vspace{-0.1in}
\begin{figure}[htbp]
\centering
\includegraphics[width=0.7\columnwidth]{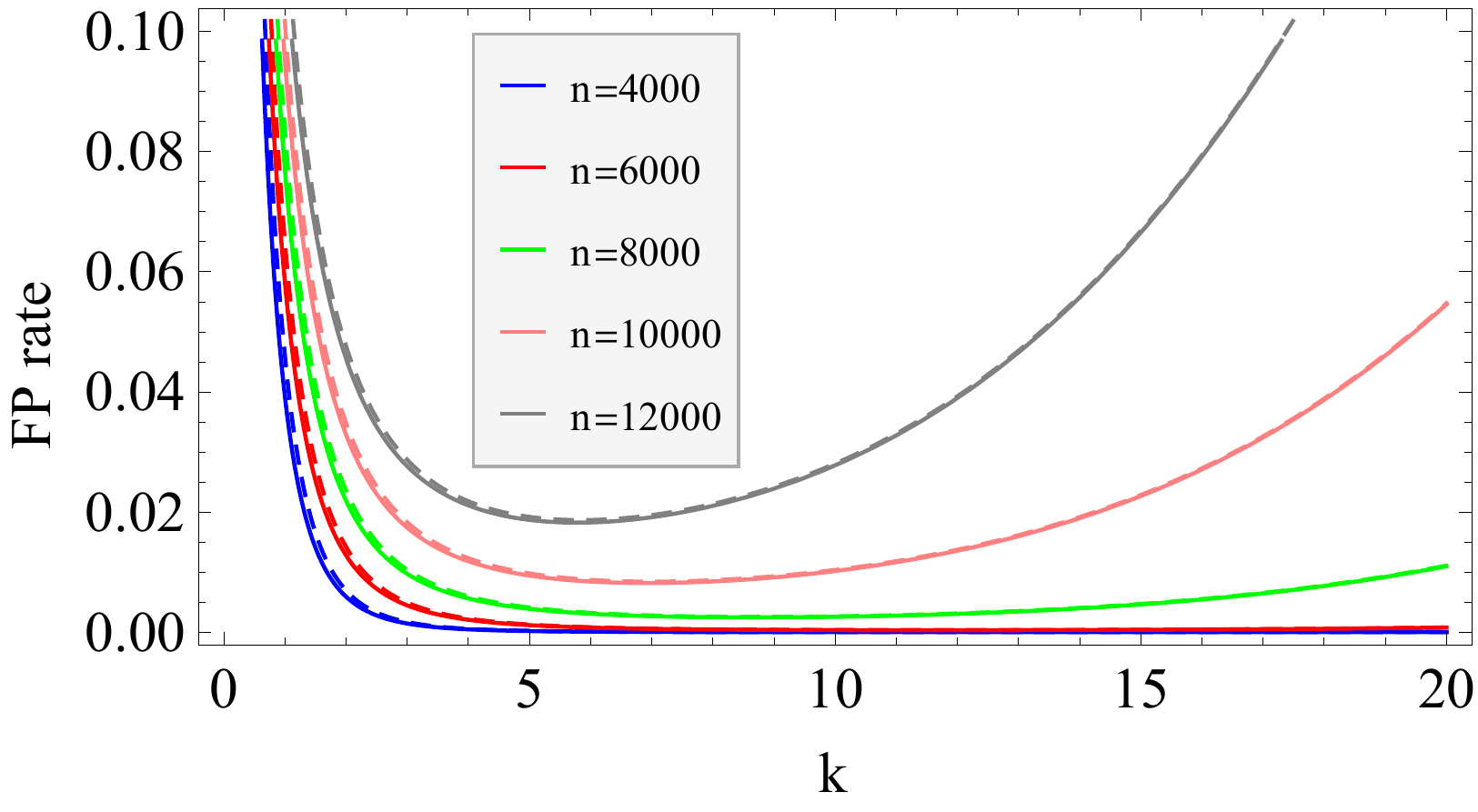}
\prefigcaption
\vspace{-0.1in}
\caption{ShBF$_\text{M}$ FPR vs. BF FPR.}
\vspace{-0.1in}
\postfig
\label{eva_fp_k}
\end{figure}

Next, we formally arrive at this result.
We calculate the minimum FPR of BF as we calculated for ShBF$_\text{M}$ in Equation \eqref{equ:fminShBFM} and show that the two FPRs are practically equal.
For a membership query of an element $u$ that does not belong to set $S$, just like ShBF$_\text{M}$, BF can also report true with a small probability, which is the FPR of BF and has been well studied in literature \cite{BF1970}.
It is given by the following equation.
\begin{equation}
f_{\text{BF}}=\left(1-\left(1-\frac{1}{m}\right)^{nk}\right)^k\approx\left(1-e^{-\frac{nk}{m}}\right)^k
\label{for:BF_FP}
\end{equation}
For given values of $m$ and $n$, the value of $k$ that minimizes $f_{\text{BF}}$ is $=\frac{m}{n}\ln2=0.6931\frac{m}{n}$.
Substituting this value of $k$ into Equation \eqref{for:BF_FP}, the minimum value of $f_{\text{BF}}$ is given by the following equation.
\begin{equation}
\label{equ:fminBF}
f^{\text{min}}_{\text{BF}}=\left(\frac{1}{2}\right)^{\left(\frac{m}{n}\ln2\right)}\approx 0.6185^{\frac{m}{n}}
\end{equation}

By comparing Equations \eqref{equ:fminShBFM} and \eqref{equ:fminBF}, we observe that the FPRs of ShBF$_\text{M}$ and BF are almost the same.
Thus, ShBF$_\text{M}$ achieves almost the same FPR as BF while reducing the number of hash computations and memory accesses by half.

\presub
{\color{greener}
\subsection{Generalization of ShBF$_\text{M}$} \postsub
As mentioned earlier, ShBF$_\text{M}$ reduces $k$ independent hash functions to $k/2+1$ independent hash functions.
Consequently, it calculates $k/2$ locations independently and remaining $k/2$ locations are correlated through the equation $h_i(e) + o_1(e)$ ($1\leqslant i\leqslant k/2$).
Carrying this construction strategy one step further, one could replace the first $k/2$ hash functions with $k/4$ independent hash functions and an offset $o_2(e)$, \ie, $h_j(e) + o_2(e)$ ($1\leqslant j\leqslant k/4$).
Continuing in this manner, one could eventually arrive at $log(k)+1$ hash functions.
Unfortunately, it is not trivial to calculate the FPR for this case because $log(k)$ is seldom an integer.
In this subsection, we simplify this $log$ method into a $linear$ method by first using a group of $\frac{k}{t+1}$ ($1\leqslant t\leqslant k-1$) hash functions to calculate $\frac{k}{t+1}$ hash locations and then applying shifting operation $t$ times on these hash locations.

%In general case, we can use $t+1$ hash functions, each of which sets one bit in different positions of the $w$ bits. That is to say, we can use $\dfrac{k}{t+1}+t$ hash functions to achieve the approximate effect of $k$ hash functions. \textcolor{reder}{(I wonder whether we can use \emph{$\dfrac{k}{t+1}+1$} hash functions. If so, when $t+1=k$, it can be simplified as double hashing)}.

Consider a group of hash function comprising of $t+1$ elements, \emph{i.e.}, $\langle h_1(x), h_2(x), \ldots , h_{t+1}(x) \rangle$.
After completing the construction phase using this group of hash functions, the probability that any given bit is 0 is $\dfrac{m-w}{m}+\dfrac{w-1}{m}\dfrac{w-2}{w-1} \ldots \dfrac{w-t-1}{w-t}=1-\dfrac{t+1}{m}$.
To insert $n$ elements, we need $\dfrac{nk}{t+1}$ such group insertion operations.
After completing the insertion, the probability $p'$ that one bit is still 0 is given by the following equation.
\begin{equation}
\label{g:p'}
\begin{aligned}
p'= \left(1-\dfrac{t+1}{m}\right)^{\dfrac{kn}{t+1}}\approx e^{-\dfrac{kn}{m}}
\end{aligned}
\end{equation}
Note that this probability formula is essentially $k$ times product of $e^{-\frac{n}{m}}$.
Thus, we can treat our ShBF$_\text{M}$ as a \emph{partitioned Bloom filter}, where the output of each hash function covers a distinct set of consecutive $\dfrac{w-1}{t}$ bits.
Setting $w=m$ makes this scheme partitioned Bloom filter.
The equations below calculate the FPR $f$ for this scheme.

\begin{equation}
\label{g1:ghbf}
\begin{aligned}
f = \left(1-p'\right)^{\dfrac{k}{t+1}} \times \left(f_{group}\right)^{\dfrac{k}{t+1}}
\end{aligned}
\end{equation}
where
\begin{equation}
\label{g1:group}
\begin{aligned}
f_{group} = &\dfrac{1}{t} \times \left(1-p'\right)^{2} \times \dfrac{\left(1-p'\right)^{t} - \left(1-\dfrac{w-1-t}{w-1} \times p'\right)^{t}}{\left(1-p'\right) - \left(1-\dfrac{w-1-t}{w-1} \times p'\right)} \\
&+ p' \times \left(1-\dfrac{w-1-t}{w-1} \times p'\right)^{t}
\end{aligned}
\end{equation}
Due to space limitations, we have moved the derivation of this equation to our extended version at arxiv.org.

When $t=1$, its false positive rate can be simplified as $f = \left(1-p'\right)^{\dfrac{k}{2}}$ *$\left(1-p'+\dfrac{1}{w-1} \times p'^{2}\right)^{\dfrac{k}{2}}$.
Similarly, when $w$ goes to infinity, FPR simplifies to $f=(1-p')^k$, which is the formula for FPR of a \emph{standard Bloom filter}.
}

{\color{greener}
\subsection{False Positive Derivation}
\textbf{In our submission, we did not present this subsection due to space limitation. This is the false positive derivation details for interested reviewers.}

To query a non-existent object, the false positive occurs when the hash function group returns all $1s$, assuming the hash function group is $\langle h_i(x), o_1(x),\ldots , o_t(x)\rangle$ ($1\leqslant i \leqslant \frac{k}{t+1}$). % The probability that the corresponding bit of $h_i(x)$ is $1$ is ($1-p'$). Assume that the corresponding bit of $h_i(x)$ is $1$, then we can calculate the probability that the corresponding bit of $o_i(x)$, where $i \in [1, t]$. The discussion is divided into two situations based on our ShBF$_\text{M}$ methods:

1)  If the corresponding bit of $h_i(x)$ is not set by the left $w-1$ bits, then $h_i(x)$ must cause $t$ bits, in the following $w-1$ bits, to be set to 1.  There are $t+1$ situations in total. Note that the $r-$th $(r \in [0, t])$ situation represents that the corresponding bits of $r$ hash functions out of $t$ are set by $h_i(x)$, and another $t-r$ bits are not set by $h_i(x)$. Considering the corresponding bits of the hash function subset $<o_1(x), \ldots , o_t(x)>$, the probability that each of these bits is set by $h_i(x)$ is

\begin{equation}
\label{g:lamda1}
\begin{aligned}
\lambda _{1} = \dfrac{t}{w-1}
\end{aligned}
\end{equation}

and the probability that each of these bits is not set by $h_i(x)$ is

\begin{equation}
\label{g:lamda2}
\begin{aligned}
\lambda _{2} = \left(1-\dfrac{t}{w-1}\right) \times \left(1-p'\right)
\end{aligned}
\end{equation}

Therefore, the probability that the $r-$th situation occurs is $C_{t}^{r} \times \lambda _{1}^{r} \times \lambda _{2}^{t-r}$. The total probability that the $t+1$ situations occur is

\begin{equation}
\label{g:totProb1}
\begin{aligned}
f_{I} = \Sigma _{r=0}^{t} C_{t}^{r} \times \lambda _{1}^{r} \times \lambda _{2}^{t-r} = \left(\lambda _{1}+ \lambda _{2}\right)^{t}
\end{aligned}
\end{equation}

2) If the corresponding bit of $h_i(x)$ is set by the left $w-1$ bits, then the problem can be divided into $t$ situations, where each situation has a probability of $\dfrac{1}{t}$. Note that the $l'-$th $(l' \in [0, t-1])$ situation represents that the maximum number of bits in the current subgroup $\langle o_1(x), \ldots , o_t(x)\rangle$, which are set to 1 by the previous hash function group causing the corresponding bit of $h_i(x)$ to be set to 1. Since our hash function group adopts partitioned Bloom filter, each hash function in the previous hash function group can cause at most 1 bit to be set 1 in current hash function group. When the corresponding bit of $h_i(x)$ in the previous hash function group locates at the first $\dfrac{w-1}{t}$ bits, then the previous hash function group will cause 0 bit to be set 1 in current subgroup $\langle o_1(x), \ldots , o_t(x)\rangle$, because there is only one possibility that the last hash function in the previous group causes the bit of current $h_i(x)$ to be set to $1$. This is the $0-$th situation. Similarly, for the $l'-$th situation, if the corresponding bit of $h_i(x)$ in the previous hash function group locates at $\left[w-1+l' \times \dfrac{w-1}{t}, w-1+ (l'+1) \times \dfrac{w-1}{t}\right)$, there are at most $l'$ bits in the current hash function group set by the previous group, and at least $t-l'$ bits in the current hash function group not set by the previous group. Therefore, for the $l'-$th situation,  the probability that all the bits in the current hash function group are 1 is

\begin{equation}
\label{g:lth}
\begin{aligned}
f_{l'} &= \dfrac{1}{t} \times \left(\Sigma _{r'=0}^{l'} C_{l'}^{r'} \times \lambda _{1}^{r'} \times \lambda _{2}^{l'-r'}\right) \times \left(1-p'\right)^{t-l'} \\
&= \dfrac{1}{t} \left(\lambda _{1}+ \lambda _{2}\right)^{l'} \times \left(1-p'\right)^{t-l'} \\
\end{aligned}
\end{equation}

The total probability that all the $t$ situations happen is
\begin{equation}
\label{g:totProb2}
\begin{aligned}
f_{II} =& \Sigma_{l'=0}^{t-1} f_{l'} = \Sigma _{l'=0} ^{t-1}\dfrac{1}{t} \times \left(\left(\lambda _{1}+ \lambda _{2}\right)^{l'} \times \left(1-p'\right)^{t-l'}\right) \\
= &\dfrac{1}{t} \times (\left(\lambda _{1}+ \lambda _{2}\right)^{0} \times \left(1-p'\right)^{t-0}  + \ldots \\
 & + \dfrac{1}{t} \times \left(\lambda _{1}+ \lambda _{2}\right)^{t-1} \times \left(1-p'\right)^{t-\left(t-1\right)})
\end{aligned}
\end{equation}

Assuming $x = \Sigma _{l'=0} ^{t}\left(\left(\lambda _{1}+ \lambda _{2}\right)^{l'} \times \left(1-p'\right)^{t-l'}\right)$, we get the following equation

\begin{equation}
\label{g:sol}
\begin{aligned}
x \times \dfrac{\lambda _{1}+ \lambda _{2}} {1-p'} = x - \left(1-p'\right)^{t} + \left(\lambda _{1}+ \lambda _{2}\right)^{t}
\end{aligned}
\end{equation}

By solving equation \ref{g:sol}, we obtain the solution

\begin{equation}
\label{g:res}
\begin{aligned}
x = \dfrac{\left(1-p'\right)^{t} - \left(\lambda _{1}+ \lambda _{2}\right)^{t}} {\left(1-p'\right) - \left(\lambda _{1}+ \lambda _{2}\right)} \times \left(1-p'\right)
\end{aligned}
\end{equation}

Probability that 1) happens is $C _{w-1} ^{1} \times \left(\dfrac{1}{w-1}\right) \times \left(1-p'\right) = \left(1-p'\right)$, so the probability that 2) happens is $\left(1-\left(1-p'\right)\right)=p'$.

Combining 1) and 2), we know that when $h_i(x)$ is $1$, the probability that all the bits of the current group $\langle o_1(x),\ldots, o_t(x)\rangle$ are 1 is
\begin{equation}
\label{g:group}
\begin{aligned}
f_{group} = &\dfrac{1}{t} \times \left(1-p'\right)^{2} \times \dfrac{\left(1-p'\right)^{t} - \left(1-\dfrac{w-1-t}{w-1} \times p'\right)^{t}}{\left(1-p'\right) - \left(1-\dfrac{w-1-t}{w-1} \times p'\right)} \\
&+ p' \times \left(1-\dfrac{w-1-t}{w-1} \times p'\right)^{t}
\end{aligned}
\end{equation}
Probability that the corresponding bits of the first $\dfrac{k}{t+1}$ hash functions are $1$ is $\left(1-p'\right)^{\dfrac{k}{t+1}}$. Therefore, the false positive of the generalized ShBF$_\text{M}$ is
\begin{equation}
\label{g:ghbf}
\begin{aligned}
f = \left(1-p'\right)^{\dfrac{k}{t+1}} \times \left(f_{group}\right)^{\dfrac{k}{t+1}}
\end{aligned}
\end{equation}

Especially, when $t=1$, its false positive can be simplified as $f = \left(1-p'\right)^{\dfrac{k}{2}}$* \\ $\left(1-p'+\dfrac{1}{w-1} \times p'^{2}\right)^{\dfrac{k}{2}}$. Similarly, when $w$ goes to infinity, its false positive is $f=(1-p')^k$, which is the formula for \emph{standard Bloom filter}.

\iffalse
\begin{equation}
\label{hi:k:2}
\begin{aligned}
(1-p')(1-p')+(1-(1-p'))\left(1-\dfrac{w-2}{w-1}p'\right)
\end{aligned}
\end{equation}
\fi

\presec
\section{Association Queries} \postsec
\label{sec:AssociationQueries}
In this section, we first describe the construction and query phases of ShBF for association queries, \textcolor{reder}{which are also called membership test}.
We use ShBF$_\text{A}$ to denote the ShBF scheme for association queries.
Second, we describe the updating methods of ShBF$_\text{A}$.
Third, we derive the FPR of ShBF$_\text{A}$.
Last, we analytically compare the performance of ShBF$_\text{A}$ with that of iBF.

\presub
\subsection{ShBF$_\textbf{A}$ -- Construction Phase} \postsub
The construction phase of ShBF$_\text{A}$ proceeds in three steps.
Let $h_1(.), \cdots, h_k(.)$ be $k$ independent hash functions with uniformly distributed outputs.
Let $S_1$ and $S_2$ be the two given sets.
First, ShBF$_\text{A}$ constructs a hash table $T_1$ for set $S_1$ and a hash table $T_2$ for set $S_2$.
Second, it constructs an array $B$ of $m$ bits, where each bit is initialized to 0.
Third, for each element $e\in S_1$, to store its existence information, ShBF$_\text{A}$ calculates $k$ hash functions $h_1(e)\%m, \cdots, h_k(e)\%m$ and searches $e$ in $T_2$.
If it does not find $e$ in $T_2$, to store its auxiliary information, it sets the offset $o(e)=0$.
However, if it does find $e$ in $T_2$, to store its auxiliary information, it calculates the offset $o(e)$ as $o(e)=o_1(e)=h_{k+1}(e) \%((\overline{w}-1)/2)+1$, where $h_{k+1}(.)$ is a hash function with uniformly distributed output and $\overline{w}$ is a function of machine word size $w$, which we will discuss shortly.
Fourth, it sets the $k$ bits $B[h_1(e)\%m+o(e)], \cdots, B[h_k(e)\%m+o(e)]$ to 1.
Fifth, for each element $e\in S_2$, to store its existence information, ShBF$_\text{A}$ calculates the $k$ hash functions and searches it in $T_1$.
If it finds $e$ in $T_1$, it does not do anything because its existence as its auxiliary information have already been stored in the array $B$.
However, if it does not find $e$ in $T_1$, to store its auxiliary information, it calculates the offset $o(e)$ as $o(e)=o_2(e)=o_1(e)+h_{k+2}(e) \%((\overline{w}-1)/2)+1$, where $h_{k+2}(.)$ is also a hash function with uniformly distributed output.
Last, it sets the $k$ bits $B[h_1(e)\%m+o(e)], \cdots, B[h_k(e)\%m+o(e)]$ to 1.
To ensure that ShBF$_\text{A}$ can read $B[h_i(e) \%m]$, $B[h_i(e) \%m +o_1(e)]$, and $B[h_i(e) \%m +o_2(e)]$ in a single memory access when querying, we let $\overline{w}\leqslant w-7$.
We derived this condition $\overline{w}\leqslant w-7$ earlier at the end of Section \ref{subsec:ShBFMConstructionPhase}.
As the maximum value of $h_i(e)\%m + o_2(e)$ can be equal to $m+\overline{w}-2$, we append the $m$-bit array $B$ with $\overline{w}-2$ bits.

\presub
\subsection{ShBF$_\textbf{A}$ -- Query Phase} \postsub\label{subsec:ShBFAQueryPhase}

\textit{We assume that the incoming elements always belong to  $S_1\cup S_2$ in the load balance application\footnote{The application is mentioned in the first paragraph of Introduction Section.} for convenience.}
To query an element $e \in S_1\cup S_2$, ShBF$_\text{A}$ finds out which sets the element $e$ belongs to in the following three steps.
First, it computes $o_1(e)$, $o_2(e)$, and the $k$ hash functions $h_i(e)\%m$ ($1\leqslant i\leqslant k$).
Second, for each $1 \leqslant i \leqslant k$, it reads the $3$ bits $B[h_i(e)\%m]$, $B[h_i(e)\%m+o_1(e)]$, and $B[h_i(e)\%m+o_2(e)]$.
Third, for these $3k$ bits, if all the $k$ bits $B[h_1(e)\%m], \cdots, B[h_k(e)\%m]$ are 1, $e$ may belong to $S_1-S_2$.
In this case, ShBF$_\text{A}$ records (but does not yet declare) $e \: \overline{\in} \: S_1-S_2$.
Similarly, if all the $k$ bits $B[h_1(e)\%m+o_1(e)], \cdots, B[h_k(e)\%m+o_1(e)]$ are 1, $e$ may belong to $S_1\cap S_2$ and ShBF$_\text{A}$ records $e \: \overline{\in} \: S_1\cap S_2$.
Finally, if all the $k$ bits $B[h_1(e)\%m+o_2(e)], \cdots, B[h_k(e)\%m+o_2(e)]$ are 1, $e$ may belong to $S_2 - S_1$ and ShBF$_\text{A}$ records $e\, \overline{\in} \: S_2-S_1$.

Based on what ShBF$_\text{A}$ recorded after analyzing the $3k$ bits, there are following 7 outcomes.
If ShBF$_\text{A}$ records that:
\begin{enumerate}
\vspace{-0.05in}
\item only $e\: \overline{\in} \: S_1-S_2$, it declares that $e$ belongs to $S_1-S_2$.
\vspace{-0.05in}
\item only $e\: \overline{\in} \: S_1\cap S_2$, it declares that $e$ belongs to $S_1\cap S_2$.
\vspace{-0.05in}
\item only $e\: \overline{\in} \: S_2- S_1$, it declares that $e$ belongs to $S_2-S_1$.
\vspace{-0.05in}
\item both $e\: \overline{\in} \: S_1-S_2$ and $e\: \overline{\in} \: S_1\cap S_2$, it declares that $e$ belongs to $S_1$ but is unsure whether or not it belongs to $S_2$.
\vspace{-0.05in}
\item both $e\: \overline{\in} \: S_2-S_1$ and $e\: \overline{\in} \: S_1\cap S_2$, it declares that $e$ belongs to $S_2$ but is unsure whether or not it belongs to $S_1$.
\vspace{-0.05in}
\item both $e\: \overline{\in} \: S_1-S_2$ and $e\: \overline{\in} \: S_2 - S_1$, it declares that $e$ belongs to $S_1-S_2 \cup S_2-S_1$.
\vspace{-0.05in}
\item all $e\: \overline{\in} \: S_1-S_2$, $e\: \overline{\in} \: S_1\cap S_2$, and $e\: \overline{\in} \: S_2 - S_1$, it declares that $e$ belongs $S_1\cup S_2$.
\vspace{-0.05in}
\end{enumerate}

Note that for all these seven outcomes, the decisions of ShBF$_\text{A}$ do not suffer from false positives or false negatives.
However, decisions 4 through 6 provide slightly incomplete information and the decision 7 does not provide any information because it is already given that $e$ belongs to $S_1\cup S_2$.
We will shortly show that the probability that decision of ShBF$_\text{A}$ is one of the decisions 4 through 7 is very small, which means that with very high probability, it gives a decision with \emph{clear} meaning, and we call it a \textit{clear answer}.

\presub
\subsection{ShBF$_\textbf{A}$ -- Updating} \postsub
Just like BF handles updates by replacing each bit by a counter, we can also extend ShBF$_\text{A}$ to handle updates by replacing each bit by a counter.
We use CShBF$_\text{A}$ to denote this counting version of ShBF$_\text{A}$.
Let $C$ denote the array of $m$ counters.
To insert an element $e$, after querying $T_1$ and $T_2$ and determining whether $o(e)=0$, $o_1(e)$, or $o_2(e)$, instead of setting $k$ bits to 1, we increment each of the corresponding $k$ counters by 1; that is, we increment the $k$ counters $C[h_1(e)\%m+o(e)], \cdots, C[h_k(e)\%m+o(e)]$ by 1.
To delete an element $e$, after querying $T_1$ and $T_2$ and determining whether $o(e)=0$, $o_1(e)$, or $o_2(e)$, we decrement $C[h_i(e)\%m+o(e)]$ by 1 for all $1\leqslant i \leqslant k$.
To have the benefits of both fast query processing and small memory consumption, we maintain both ShBF$_\text{A}$ and CShBF$_\text{A}$, but store array $B$ in fast SRAM and array $C$ in slow DRAM.
After each update, we synchronize array $C$ with array $B$.

\presub
\subsection{ShBF$_\textbf{A}$ -- Analysis} \postsub
Recall from Section \ref{subsec:ShBFAQueryPhase} that ShBF$_\text{A}$ may report seven different outcomes.
Next, we calculate the probability of each outcome.
Let $P_i$ denote the probability of the $i^{\text{th}}$ outcome.
{\color{greener}
Before proceeding, we show that $h_i(.)+o(.)$ and $h_j(.)+o(.)$, when $i\neq j$, are independent of each other.
For this we show that given two random variables X and Y and a number $z\in R^+$, where $R^+$ is the set of positive real numbers, if X and Y are independent, then X+$z$ and Y+$z$ are independent.
As X and Y are independent, for any $x\in R$ and $y\in R$, we have
\vspace{-0.07in}
\begin{equation}
P(X\leqslant x, Y\leqslant y)=P(X\leqslant x)*P(Y\leqslant y)
\vspace{-0.07in}
\end{equation}
Adding $z$ to both sides of all inequality signs in $P(X\leqslant x, Y\leqslant y)$, we get
\vspace{-0.07in}
\begin{equation}
\begin{aligned}
P(X+z & \leqslant x+z, Y+z\leqslant y+z) \\
& =P(X \leqslant x, Y \leqslant y) \\
& = P(X \leqslant x) * P(Y \leqslant y) \\
& = P(X+z \leqslant x+z) * P(Y=z \leqslant y+z)
\end{aligned}
\vspace{-0.07in}
\end{equation}
Therefore, $X+z$ and $Y+z$ are independent.
}

Let $n'$ be the number of distinct elements in $S_1\cup S_2$, and let $k$ be the number of hash functions.
After inserting all $n'$ elements into ShBF$_\text{A}$, the probability $p'$ that any given bit is still 0 is given by the following equation.
\vspace{-0.07in}
\begin{equation}
p'=\left( 1-\dfrac{1}{m}\right) ^{kn'}
\vspace{-0.07in}
\end{equation}
This is similar to one minus the false positive probability of a standard BF.
When $k=\ln2 \frac{m}{n'}$, $p'\approx 0.5$.

Note that the probabilities for outcomes 1, 2, and 3 are the same.
Similarly, the probabilities for outcomes 4, 5, and 6 are also the same.
Following equations state the expressions for these probabilities.
\vspace{-0.05in}
\begin{equation}
\begin{aligned}
P_1 & =P_2=P_3=(1-0.5^k)^2     \\
P_4 & =P_5=P_6=0.5^k*(1-0.5^k)  \\
P_7 & =(0.5^k)^2        \\
\end{aligned}
\label{equ:p3}
\end{equation}
\vspace{-0.1in}

When the incoming element $e$ actually belongs to one of the three sets: $S_1-S_2$, $S_1\cap S_2$, and $S_2-S_1$, there is one combination each for $S_1-S_2$ and $S_2-S_1$ and two combinations for $S_1\cap S_2$.
Consequently, the total probability is $P_1+P_4*2+P_7$, which equals 1.
This validates our derivation of the expressions in Equation \ref{equ:p3}.
As an example, let $k$$=$$\frac{m}{n'}\ln{2}$$=$$10$.
Thus, $P_1$$=$$P_2$$=$$P_3$$=$$(1-0.5^{10})^2\approx 0.998$, $P_4$$=$$P_5$$=$$P_6$$=$$0.5^{10}*(1-0.5^{10})=9.756*10^{-4}$, and $P_7=(1-0.5^{10})^2\approx 9.54*10^{-7}$.
This example shows that with probability of 0.998, ShBF$_\text{A}$ gives a \textit{clear} answer, and with probability of only $9.756*10^{-4}$, ShBF$_\text{A}$ gives an answer with incomplete information.
The probability with which it gives an answer with no information is just $9.54*10^{-7}$, which is negligibly small.

\presub
\vspace{-0.05in}
\subsection{Comparison between ShBF$_\text{A}$ with iBF} \postsub
\label{sec:iBFcompare}
For association queries, a straightfoward solution is to build one individual BF (iBF) for each set.
Let $n_1$, $n_2$, and $n_3$ be the number of elements in $S_1$, $S_2$, and $S_1\cap S_2$, respectively. For iBF, let $m_1$ and $m_2$ be the size of the Bloom filter for $S_1$ and $S_2$, respectively.
Table \ref{table:compare:ibf} presents a comparison between ShBF$_\text{A}$ and iBF.
We observe from the table that ShBF$_\text{A}$ needs less memory, less hash computations, and less memory accesses, and has no false positives.
\textcolor{greener}{For the iBF, as we use the traffic trace that hits the two sets with the same probability, iBF is optimal when the two BFs use identical values for the optimal system parameters and have the same number of hash functions.}
Specifically, for iBF, when $m_1+m_2=(n_1+n_2)k/\ln2$, the probability of answering a \textit{clear answer} is $\frac{2}{3}(1-0.5^k)$. For ShBF$_\text{A}$, when $m=(n_1+n_2-n_3)k/\ln2$, the probability of answering a clear answer is$(1-0.5^k)^2$.

\begin{table*}[htbp]
\centering\caption{Comparison Between ShBF$_\text{A}$ and iBF.}
\begin{tabular}{|c|c|c|c|c|c|}
\hline		&	Optimal Memory&	\#hash computations	&	\#memory accesses	&	Probability of a clear answer &	false positives	\\
\hline	iBF	&	m1+m2=(n1+n2)k/ln2	&	2k	&	2k	&	$\frac{2}{3}(1-0.5^k)$	&	YES	\\
\hline	ShBF$_\text{A}$	&	m=(n1+n2-n3)k/ln2	&	k+2	&	k	&	$(1-0.5^k)^2$	&	NO	\\
\hline
\end{tabular}
\label{table:compare:ibf}
\vspace{-0.05in}
\end{table*}

%\begin{figure}[htbp]
%\centering
%\includegraphics[width=0.45\textwidth]{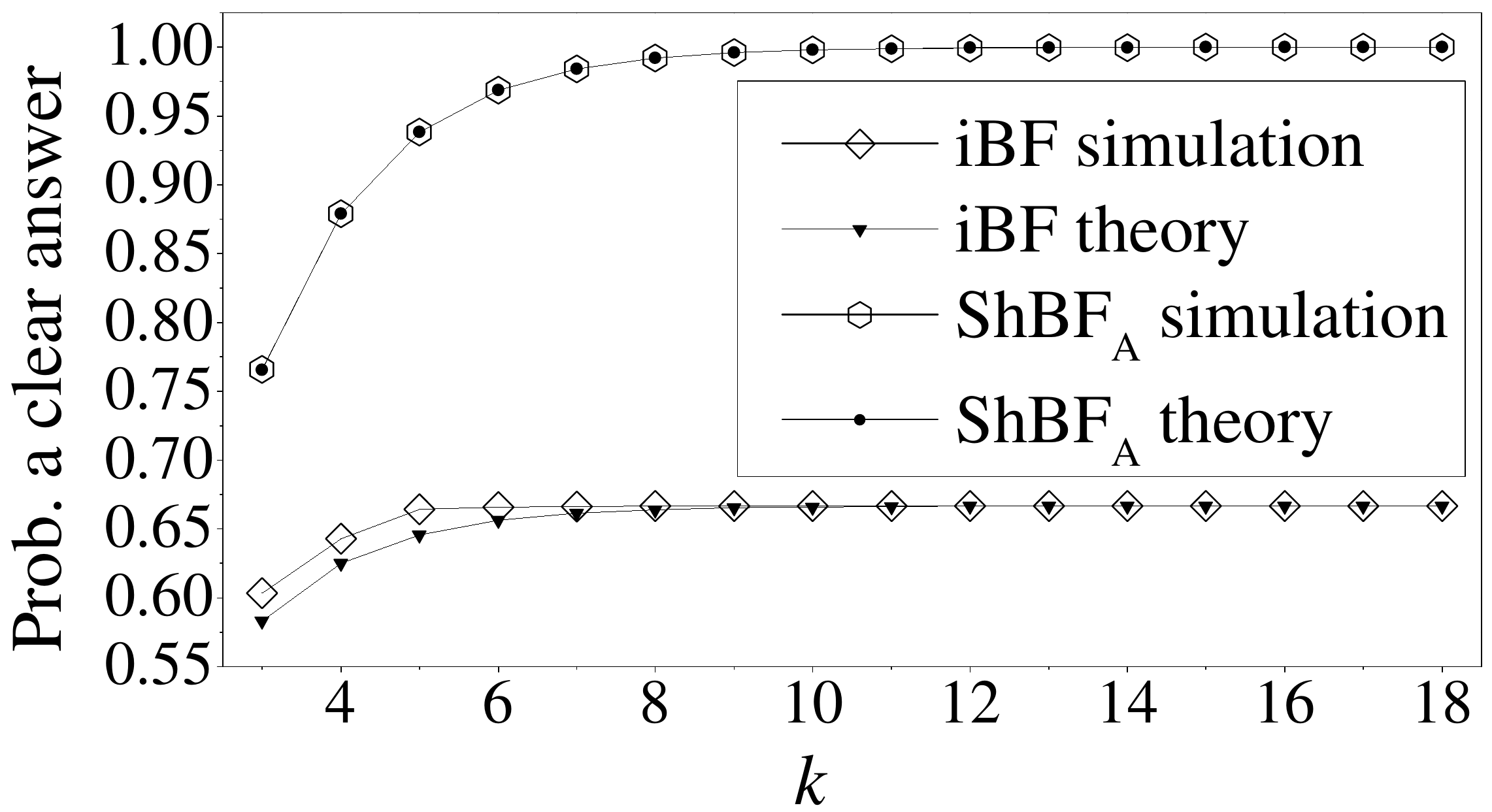}
%\caption{\textcolor{reder}{The probability of answering a clear answer.}}
%\label{draw:twoset:k}
%\end{figure}

\presec
\vspace{-0.05in}
\section{Multiplicity Queries} \postsec
\label{sec:MultiplicityQueries}
\vspace{-0.02in}
\Comment{For multiplicity queries, the item can appear several times in a multi-set.}%removing this line to be consistent with rest of the sections.
In this section, we first present the construction and query phases of ShBF for multiplicity queries.
%over a static set.
Multiplicity queries check how many times an element appears in a multi-set.
We use ShBF$_\times$ to denote the ShBF scheme for multiplicity queries.
Second, we describe the updating methods of ShBF$_\times$ .
Last, we derive the FPR and correctness rate of ShBF$_\times$.

\presub
\vspace{-0.05in}
\subsection{ShBF$_\times$ -- Construction Phase} \postsub
\vspace{-0.02in}
The construction phase of ShBF$_\times$ proceeds in three steps.
Let $h_1(.), \cdots, h_k(.)$ be $k$ independent hash functions with uniformly distributed outputs.
First, we construct an array $B$ of $m$ bits, where each bit is initialized to 0.
Second, to store the existence information of an element $e$ of multi-set $S$, we calculate $k$ hash values $h_1(e)\%m, \cdots, h_k(e)\%m$.
To calculate the auxiliary information of $e$, which in this case is the count $c(e)$ of element $e$ in $S$, we calculate offset $o(e)$ as $o(e)=c(e)-1$.
Third, we set the $k$ bits $B[h_1(e)\%m+o(e)], \cdots, B[h_k(e)\%m+o(e)]$ to 1.
%
%\textcolor{bluer}{The main difference between ShBF$_\times$ and the prior two ShBFs is that the offset $o(e)$ is unknown for ShBF$_\times$ when inserting.
%To calculate offset $o(e)$, we need the value of $c(e)$}.
%
To determine the value of $c(e)$ for any element $e\in S$, we store the count of each element in a hash table and use the simplest collision handling method called collision chain.

\presub
\vspace{-0.05in}
\subsection{ShBF$_\times$ -- Query Phase} \postsub
\label{sec:queryShBFtimes}
\vspace{-0.02in}
Given a query $e$, for each $1\leqslant i \leqslant k$, we first read $c$ consecutive bits $B[h_i(e)\%m]$, $B[h_i(e)\%m+1]$, $\cdots$, $B[h_i(e)\%m+c-1]$ in $\lceil\frac{c}{w}\rceil$ memory accesses, where $c$ is the maximum value of $c(e)$ for any $e\in S$.
In these $k$ arrays of $c$ consecutive bits, for each $1 \leqslant j \leqslant c$, if all the $k$ bits $B[h_1(e)\%m+j-1], \cdots, B[h_k(e)\%m + j-1]$ are 1, we list $j$ as a possible candidate of $c(e)$.
As the largest candidate of $c(e)$ is always greater than or equal to the actual value of $c(e)$, we report the largest candidate as the multiplicity of $e$ to avoid false negatives.
%We report the largest candidate of $c(e)$ as the multiplicity of $e$ to avoid false negatives, because the largest value is always equal to or larger than the actual multiplicities.
%
%Multiple possible value may be found, but we only report the largest one to prevent false negatives. The largest value is always equal to or larger than the actual multiplicities, thus this could incur false positives. Similarly, state-of-the-art solution (Spectral BF) only reports one value, which is also equal to or larger than the actual value.
%}
%
For the query phase, the number of memory accesses is $k\lceil\frac{c}{w}\rceil$.

\presub
\vspace{-0.05in}
\subsection{ShBF$_{\times}$ -- Updating} \postsub
\vspace{-0.07in}
\subsubsection{\textcolor{reder}{ShBF$_\times$ -- Updating with False Negatives}}
\vspace{-0.02in}
To handle element insertion and deletion, ShBF$_{\times}$ maintains its counting version denoted by CShBF$_{\times}$, which is an array $C$ that consists of $m$ counters, in addition to an array $B$ of m bits.
During the construction phase, ShBF$_\times$ increments the counter $C[h_i(e)\%m+o(e)]$ ($1\leqslant i \leqslant k$) by one every time it sets $B[h_i(e)\%m+o(e)]$ to 1.
During the update, we need to guarantee that one element with multiple multiplicities is always inserted into the filter one time.
Specifically, for every new element $e$ to insert into the multi-set $S$, ShBF$_{\times}$ first obtains its multiplicity $z$ from $B$ as explained in Section \ref{sec:queryShBFtimes}.
Second, it deletes the $z-$th multiplicity ($o(e)=z-1$) and inserts the $(z+1)-$th multiplicity ($o(e)=z$).
For this, it calculates the $k$ hash functions $h_i(e)\%m$ and decrements the $k$ counters $C[h_i(e)\%m+z-1]$  by 1 when the counters are $\geqslant1$.
Third, if any of the decremented counters becomes 0, it sets the corresponding bit in $B$ to 0.
Note that maintaining the array $C$ of counters allows us to reset the right bits in $B$ to 0.
Fourth, it increments the $k$ counters $C[h_i(e)\%m+z]$ by 1 and sets the bits $B[h_i(e)\%m+z]$ to 1.

For deleting element $e$, ShBF$_{\times}$ first obtains its multiplicity $z$ from $B$ as explained in Section \ref{sec:queryShBFtimes}. % as explained in the description of query phase earlier.
Second, it calculates the $k$ hash functions and decrements the counters $C[h_i(e)\%m+z-1]$  by 1.
Third, if any of the decremented counters becomes 0, it sets the corresponding bit in $B$ as 0.
Fourth, it increments the counters $C[h_i(e)\%m+z-2]$ by 1 and sets the bits $B[h_i(e)\%m+z-2]$ to 1.

Note that ShBF$_{\times}$ may introduce false negatives because before updating the multiplicity of an element, we first query its current multiplicity from $B$.
If the answer to that query is a false positive, \ie, the actual multiplicity of the element is less than the answer, ShBF$_{\times}$ will perform the second step and decrement some counters, which may cause a counter to decrement to 0.
Thus, in the third step, it will set the corresponding bit in $B$ to 0, which will cause false negatives.
%As this update version of ShBF$_{\times}$ has false negatives, we do not study it further.

\vspace{-0.05in}
\subsubsection{\textcolor{reder}{ShBF$_\times$ -- Updating without False Negatives}}
\vspace{-0.02in}
%\presub
%\subsection{ShBF$_\times$ -- Updating without False Negatives} \postsub
%
To \textit{eliminate false negatives}, in addition to arrays $B$ and $C$, ShBF$_\times$ maintains a hash table to store counts of each element. In the hash table, each entry has two fields: element and its counts/multiplicities.
%For each element with multiple multiplicities, we insert the element with the value of multiplicities into the hash table.
%
{\color{reder}When inserting or deleting element $e$, ShBF$_\times$ follows four steps shown in Figure \ref{draw:shbfxupdate}.
First, we obtain $e$'s counts/multiplicities from the hash table instead of ShBF$_\times$.
Second, we delete $e$'s $z$-th multiplicity from CShBF$_\times$.
Third, if a counter in CShBF$_\times$ decreases to 0, we set the corresponding bit in ShBF$_\times$ to 0.
Fourth, when inserting/deleting $e$, we insert the $(z-1)-$th/$(z+1)-$th multiplicity into ShBF$_\times$.
}
\vspace{-0.15in}
\begin{figure}[htbp]
\centering
\includegraphics[width=0.5\textwidth]{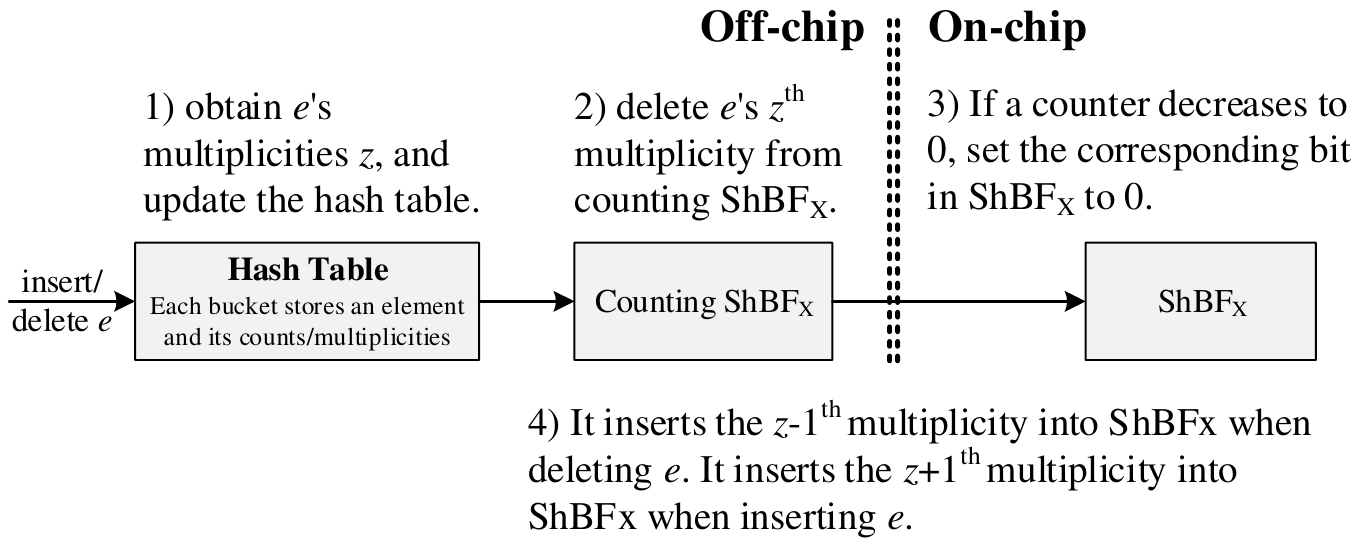}
\vspace{-0.05in}
\caption{{\color{reder}The update process of ShBF$_\times$.}}
\vspace{-0.05in}
\label{draw:shbfxupdate}
\end{figure}

%revise for revision by Tong Yang
%%
%To insert an element $e$ in the multi-set $S$, instead of obtaining its count $z$ from array $B$ in the first of the four steps described above, ShBF$_\times$ obtains its count from the hash table.
%%
%After completing the four steps, it increments the counter of $e$ in the hash table by 1.
%%
%Similarly to remove an element $e$ that leaves the multi-set $S$, instead of obtaining its count $z$ from array $B$ in the first of the four steps described above, ShBF$_\times$ obtains its count from the hash table.
%%
%After completing the four steps, it decrements the counter of $e$ in the hash table by 1.
%%
%
%%%
Note that although the counter array $C$ and the hash table are much larger than the bit array $B$, we store $B$ in SRAM for processing multiplicity queries and store $C$ and the hash table in DRAM for handling updates.

\presub
\subsection{ShBF$_\times$ -- Analysis} \postsub
%
%Due to false positives, we may output multiple possible values of $c(e)$.
%
For multiplicity queries a false positive is defined as reporting the multiplicity of an element that is larger than its actual multiplicity.
For any element $e$ belonging to multi-set $S_m$, ShBF$_{\times}$ only sets $k$ bits in $B$ to 1 regardless of how many times it appears in $S_m$.
This is because every time information about $e$ is updated, ShBF$_{\times}$ removes the existing multiplicity information of the element before adding the new information.
Let the total number of distinct elements in set $S_m$ be $n$.
The probability that an element is reported to be present $j$ times is given by the following equation.
\begin{equation}
f_0 \approx \left(1- e^{-\frac{kn}{m}} \right)^{k}
\end{equation}
%
%The probability that an element whose multiplicity is $j$ in a set is reported to be present $>j$ times in response to a multiplicity query is the false positive rate and is given by the following equation.
%\begin{equation}
%f_{\text{ShBF}_{\times}} = (1-f_0)^{c-j} \times f_0
%\label{eqa:fp:ShBF_C}
%\end{equation}
%It is minimized when $k=\frac{m}{n}ln2$.
%
%When $k=\frac{m}{n}ln2$, the minimum false positive rate becomes:
%%
%\begin{equation*}
%f^{\text{min}}_{\text{ShBF}_{\times}}= 1-\left(1-\frac{1}{2}^{c-j}\right)
%\end{equation*}

We define a metric called \textit{correctness rate}, which is the probability that an element that is present $j$ times in a multi-set is correctly reported to be present $j$ times.
{\color{greener}
When querying an element not belonging to the set, the correctness rate $CR$ is given by the following equation.
\begin{equation}
CR=(1-f_0)^c
\label{equ:correct:ShBF_X}
\end{equation}
When querying an element with multiplicity $j$ $(1\leqslant j\leqslant c)$ in the set, the correctness rate $CR'$ is given by the following equation.
\begin{equation}
CR'=(1-f_0)^{j-1}
\end{equation}
Note the right hand side of the expression for $CR'$ is not multiplied with $f_0$ because when $e$ has $j$ multiplicities, all positions $h_i(e)+j$, where $1\leqslant i\leqslant k$, must be 1.
}

%CR is given by the following equation.
%\begin{equation}
%CR= (1-f_0)^{c}
%\label{eqa:correct:ShBF_C}
%\end{equation}
%It is optimal when $k=\frac{m}{n}ln2$.

{\color{bluer}
\presub \subsection{Shifting Count-min Sketch} \postsub
\label{sec:CMsketch}
Besides Spectral BF, count-min sketch (CM sketch) can also be used to record and report the number of multiplicities of each element \cite{CMsketch}.
As shown in Figure \ref{draw:CMsketch}, a CM sketch consists of $d$ vectors, and each vector has $r$ counters.
Each vector $v_i$ $(1\leqslant i\leqslant d)$ corresponds to a hash function $h_i(.)$.
When inserting an element $e$, the CM sketch increments the counters $v_1[h_1(e)]...v_d[h_d(e)]$ by 1.
When querying an element $e$, the CM sketch reports the minimum value of $v_1[h_1(e)]...v_d[h_d(e)]$.
CM sketch is simple and easy to implement, but is not memory efficient, as the minimal unit is a counter instead of a bit.

\begin{figure}[htbp]
\centering
\hspace{-0.11in}
\subfigure[CM sketch]{
{\includegraphics[width=0.48\columnwidth]{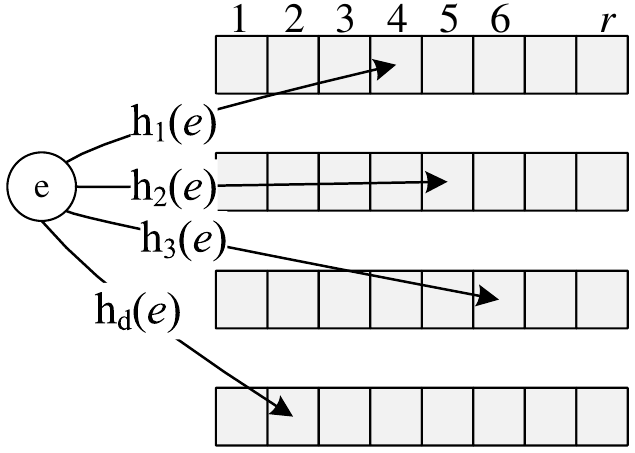}}
\label{draw:CMsketch}}
\hspace{-0.11in}
\subfigure[SCM sketch]{
{\includegraphics[width=0.48\columnwidth]{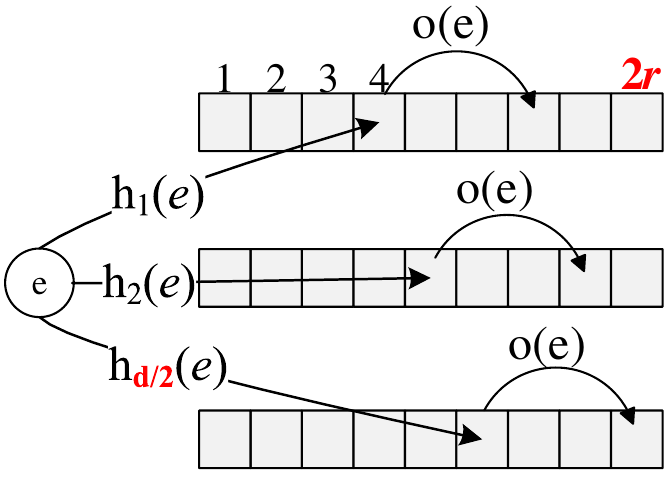}}
\label{draw:SCMsketch}}
\prefigcaption
\caption{FPR vs. $\overline{w}$.}
\label{draw:2sketch}
\postfig
\end{figure}

%\vspace{-0.1in}
%\begin{figure}[htbp]
%\centering
%\includegraphics[width=0.3\textwidth]{CMsketch}
%\vspace{-0.1in}
%\caption{CM sketch architecture.}
%\vspace{-0.1in}
%\label{draw:CMsketch}
%\end{figure}

One query on CM sketch needs $d$ hash computations and memory accesses when the length of all counters is smaller than a machine word.
In such a case, we can use our shifting framework to halve the number of memory accesses and hash functions.
Figure \ref{draw:CMsketch} shows our shifting version of CM sketch, called shifting count-min (SCM) sketch.
SCM sketch consists of $d/2$ vectors $v_i$ $(1\leqslant i\leqslant d/2)$, where each vector has $2r$ counters.
Each vector $v_i$ corresponds to a hash function $h_i(.)$.  When inserting an element $e$, first, the SCM sketch increments the counters $v_1[h_1(e)]...v_{d/2}[h_{d/2}(e)]$ by 1.
Second, it increments the counters $v_1[h_1(e)+o(e)]...v_{d/2}[h_{d/2}(e)+o(e)]$ by 1.
When querying an element $e$, the SCM sketch reports the minimal value of $v_1[h_1(e)]...v_{d/2}[h_{d/2}(e)]$ and $v_1[h_1(e)+o(e)]...v_{d/2}[h_{d/2}(e)+o(e)]$.
To read $v_i[h_i(e)]$ and $v_i[h_i(e)+o(e)]$ in one memory accesses, we set $o(e)=h_{d/2+1} \%(\overline{w}-1)+1$, where $\overline{w}\leqslant (w-7)/r$ and $w$ is the number of bits in a machine word.
%
%\vspace{-0.1in}
%\begin{figure}[htbp]
%\centering
%\includegraphics[width=0.33\textwidth]{SCMsketch}
%\vspace{-0.1in}
%\caption{SCM sketch architecture.}
%\vspace{-0.2in}
%\label{draw:SCMsketch}
%\end{figure}
%
%{\color{sss}We explain this with an example.}
%%
%Consider that the when $r\leqslant 4$, and a machine word is $\geqslant64$ bits, our shifting technique can halve the query overhead using the following method.
%
}

\begin{figure*}[htbp]
\centering
\hspace{-0.11in}
\subfigure[Changing $n$]{
{\includegraphics[width=0.32\textwidth]{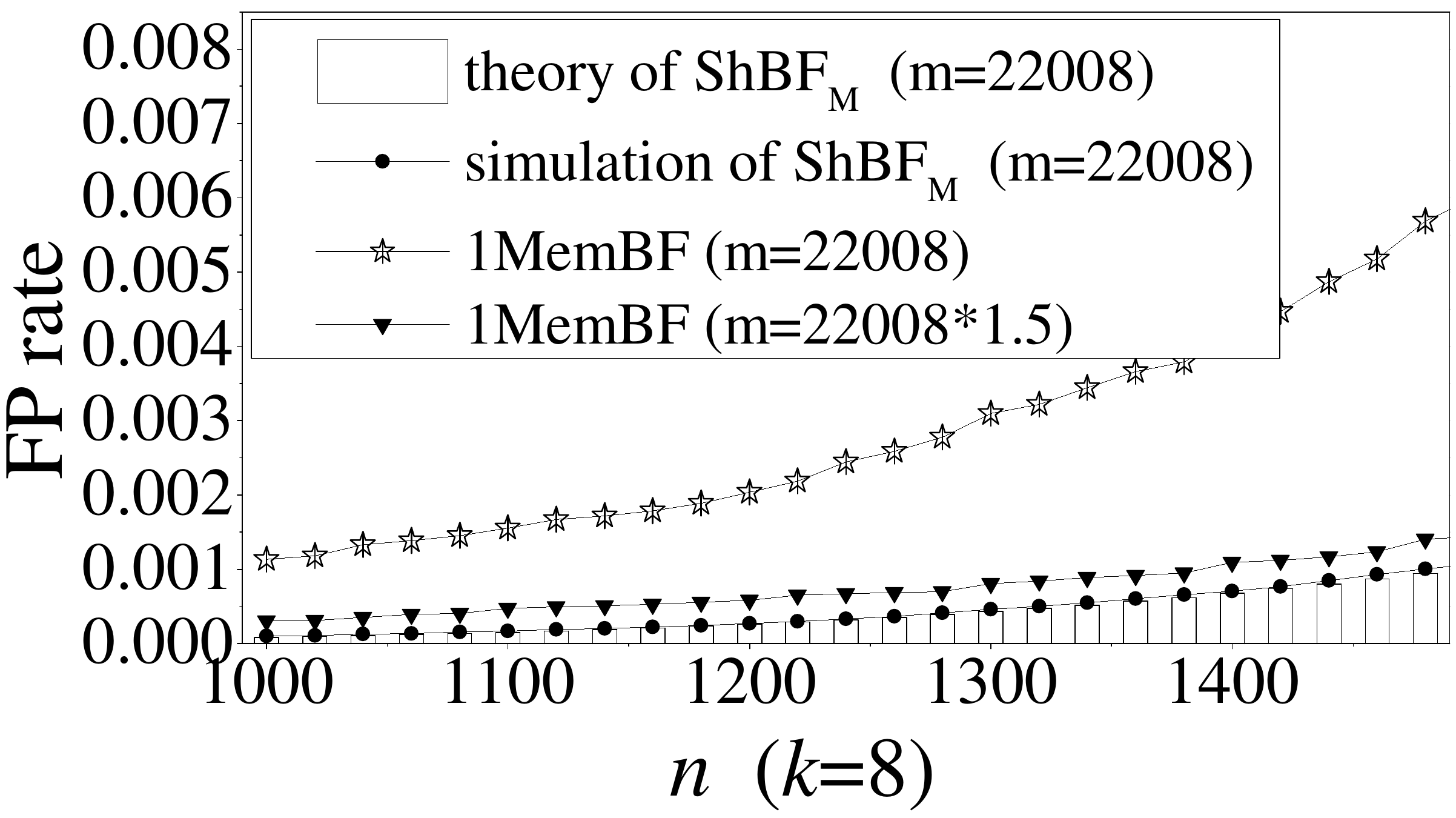}}
\label{eva:Single:FP:n}}
%\hspace{-0.11in}
\subfigure[Changing $k$]{
{\includegraphics[width=0.32\textwidth]{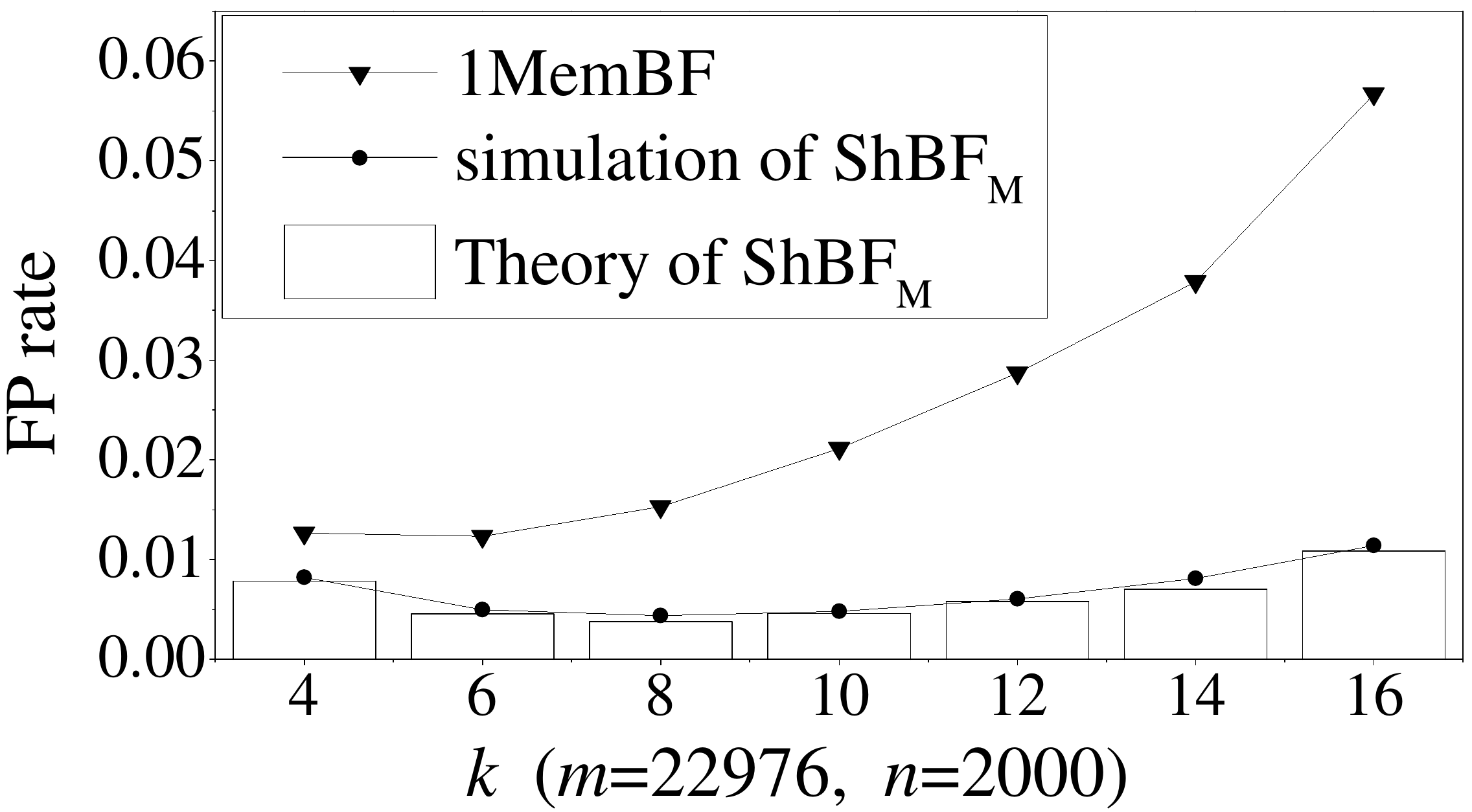}}
\label{eva:Single:FP:k}}
%\hspace{-0.11in}
\subfigure[Changing $m$]{
{\includegraphics[width=0.32\textwidth]{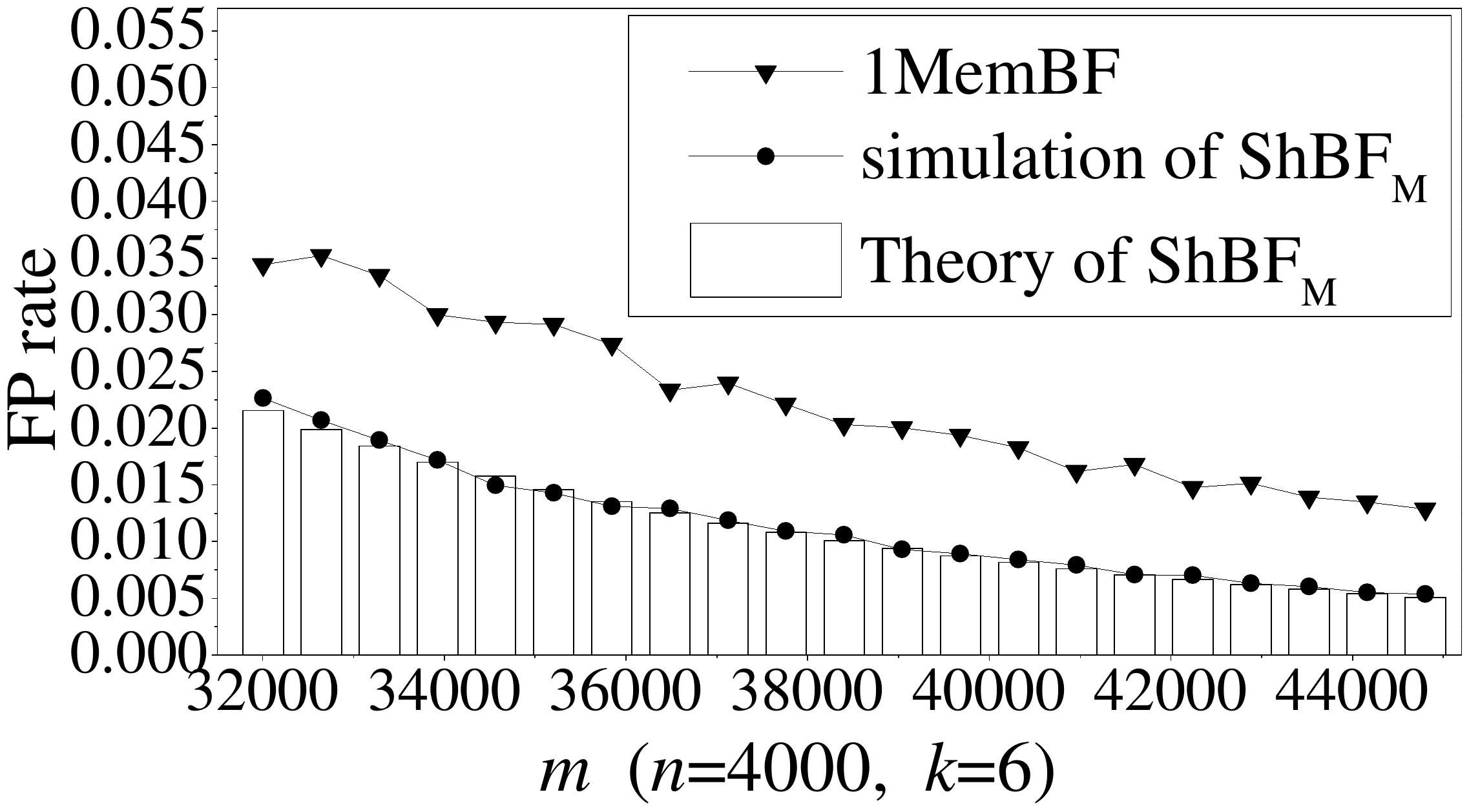}}
\label{eva:Single:FP:m}}
\vspace{-0.1in}
\caption{{\color{greener}Comparison false positive rates of ShBF$_\text{M}$ and 1MemBF.}}
\hspace{-0.11in}
\subfigure[Changing $n$]{
{\includegraphics[width=0.32\textwidth]{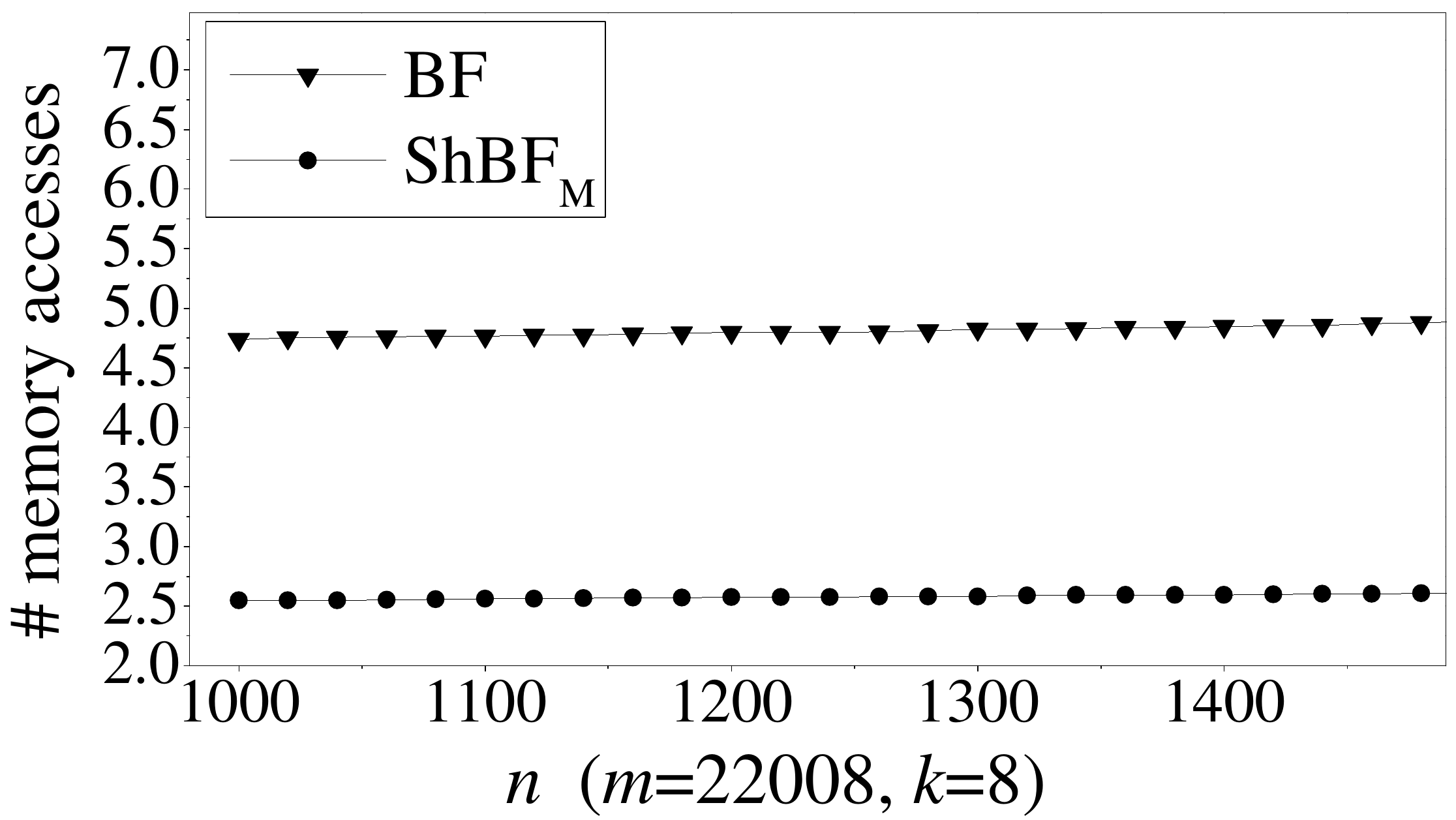}}
\label{eva:Single:Mem:n}}
%\hspace{-0.11in}
\subfigure[Changing $k$]{
{\includegraphics[width=0.31\textwidth]{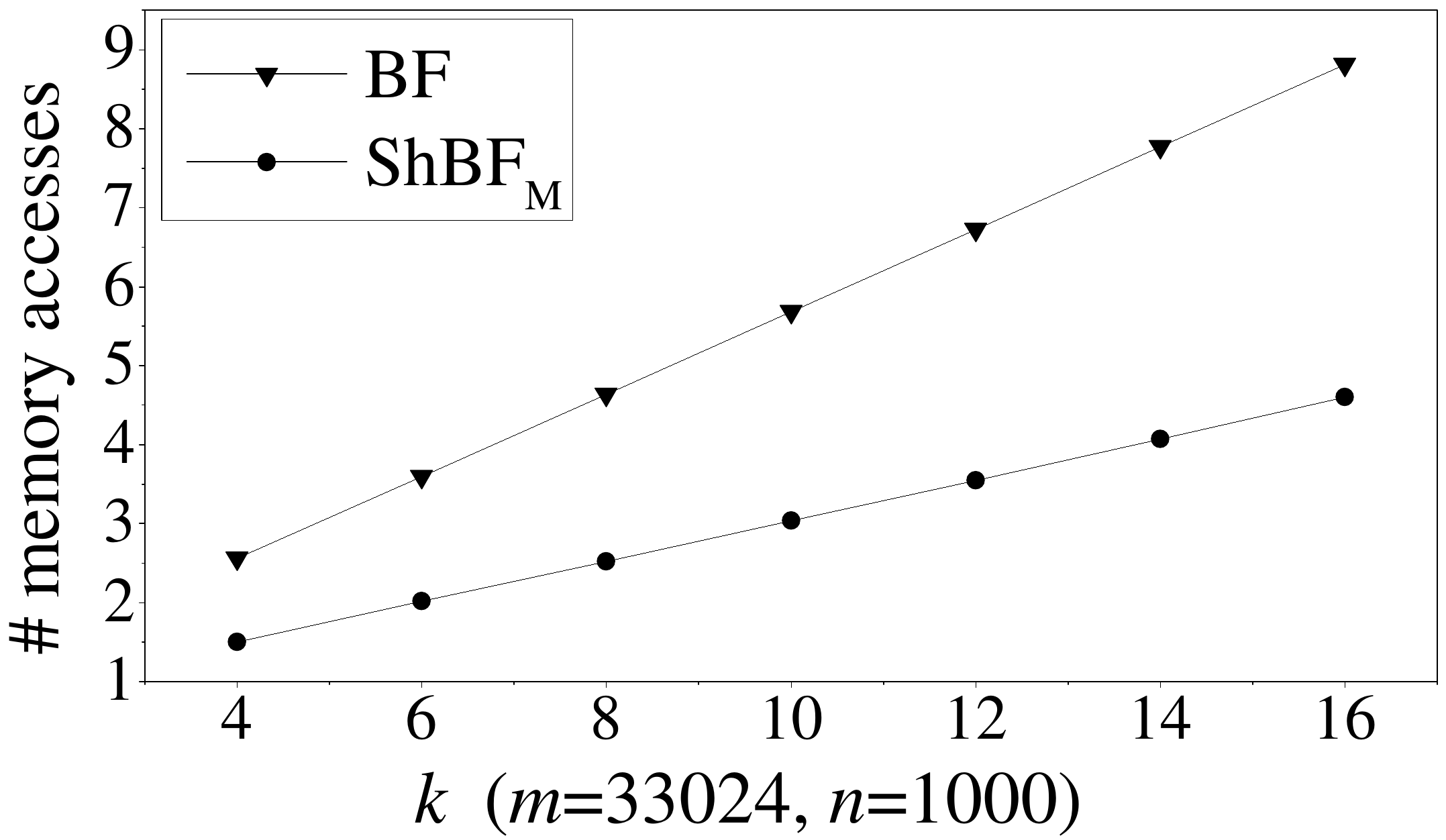}}
\label{eva:Single:Mem:k}}
%\hspace{-0.11in}
\subfigure[Changing $m$]{
{\includegraphics[width=0.32\textwidth]{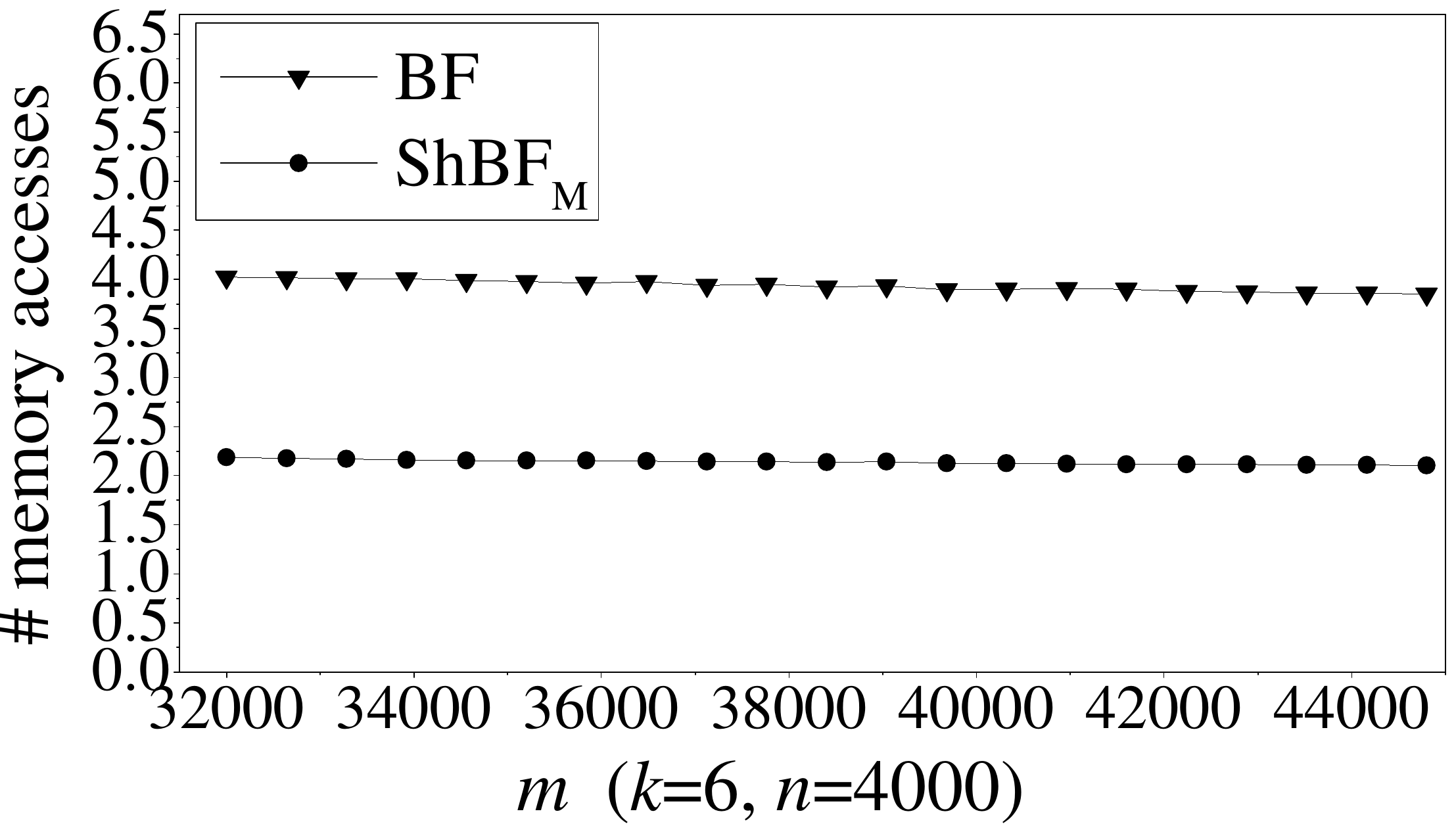}}
\label{eva:Single:Mem:m}}
\vspace{-0.1in}
\caption{{\color{greener}Comparison of number of memory accesses per query of ShBF$_\text{M}$ and BF.}}
\hspace{-0.11in}
\subfigure[Changing $n$]{
{\includegraphics[width=0.32\textwidth]{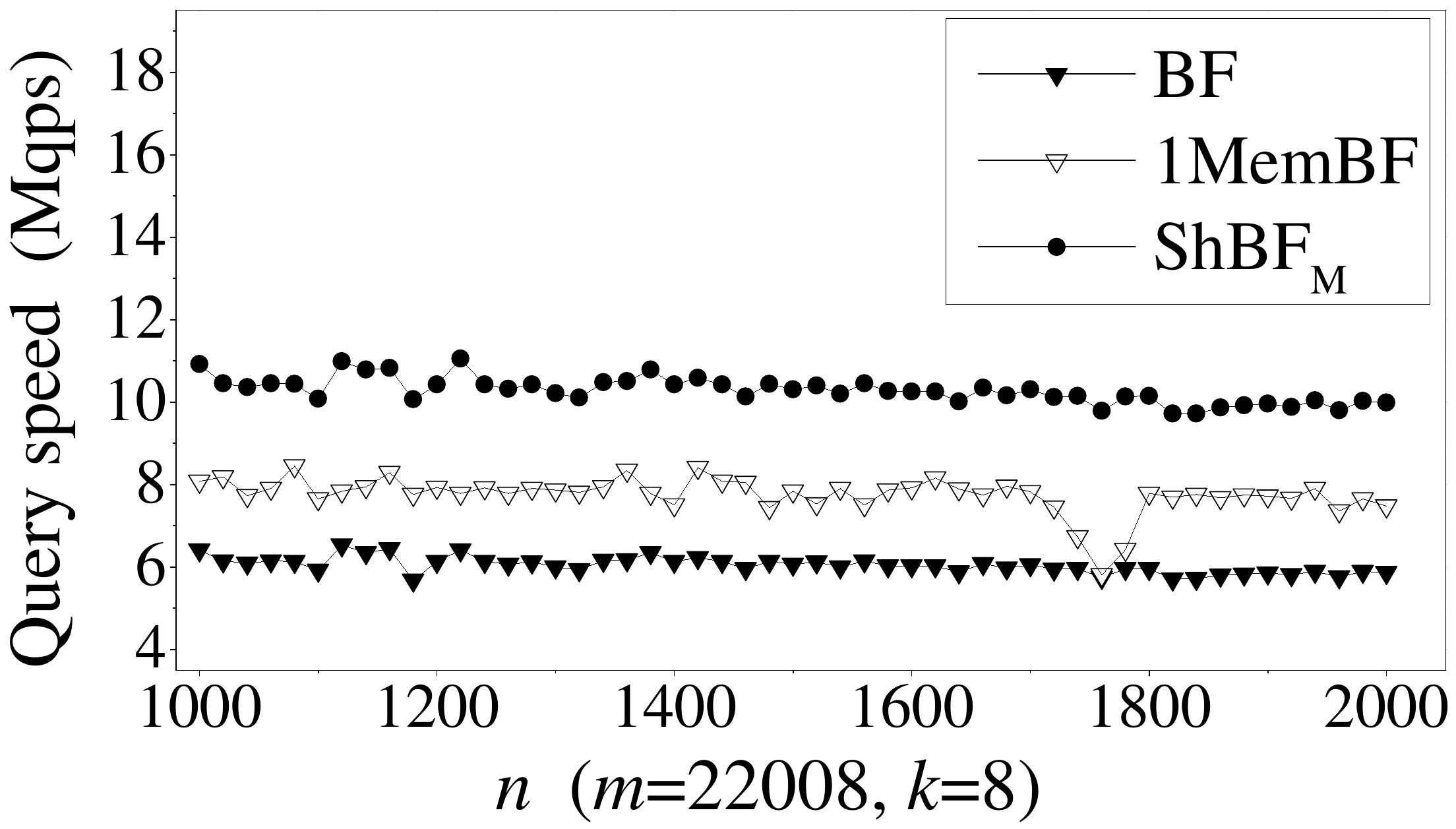}}
\label{eva:Single:Qps:n}}
%\hspace{-0.11in}
\subfigure[Changing $k$]{
{\includegraphics[width=0.32\textwidth]{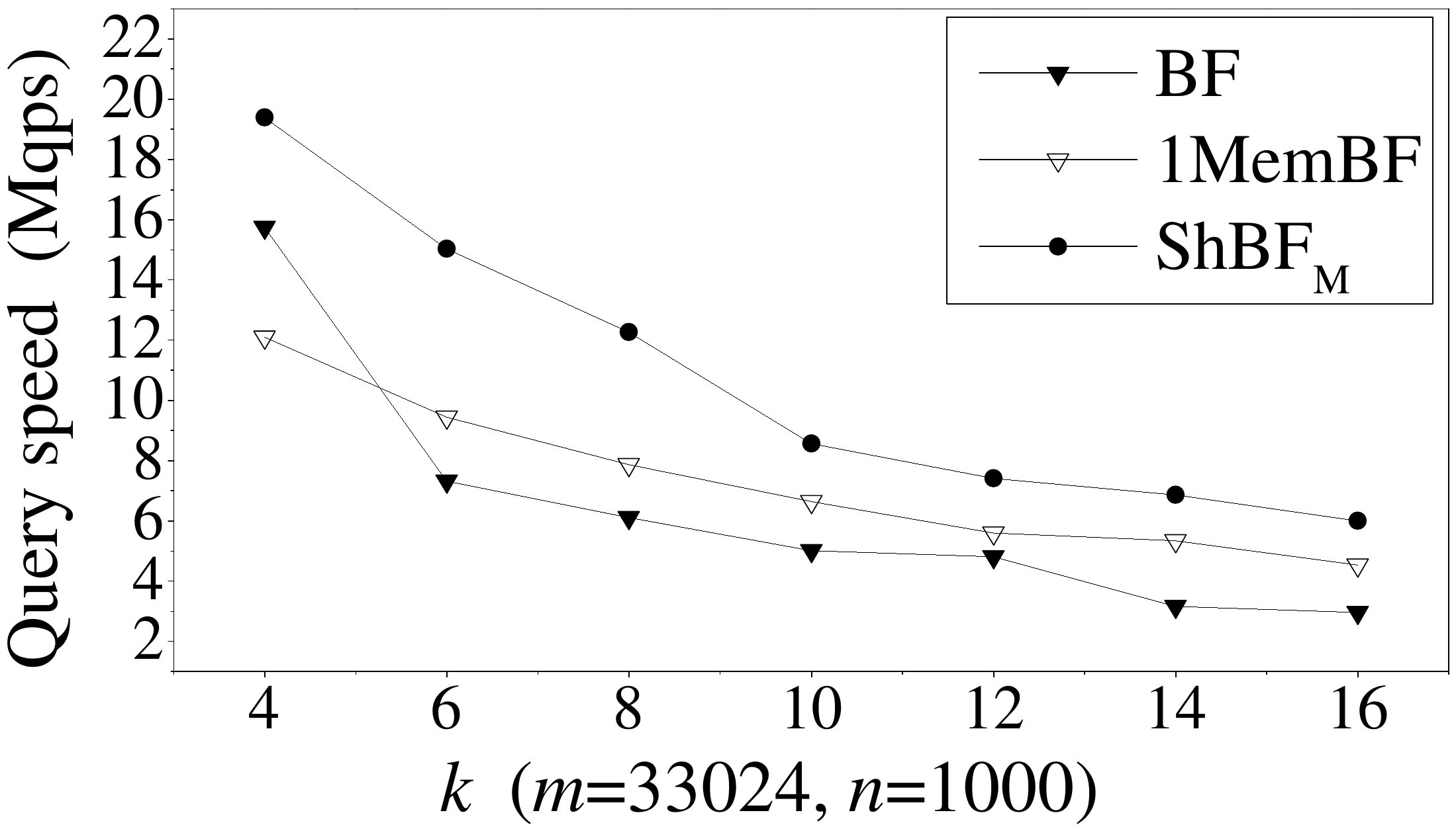}}
\label{eva:Single:Qps:k}}
%\hspace{-0.11in}
\subfigure[Changing $m$]{
{\includegraphics[width=0.32\textwidth]{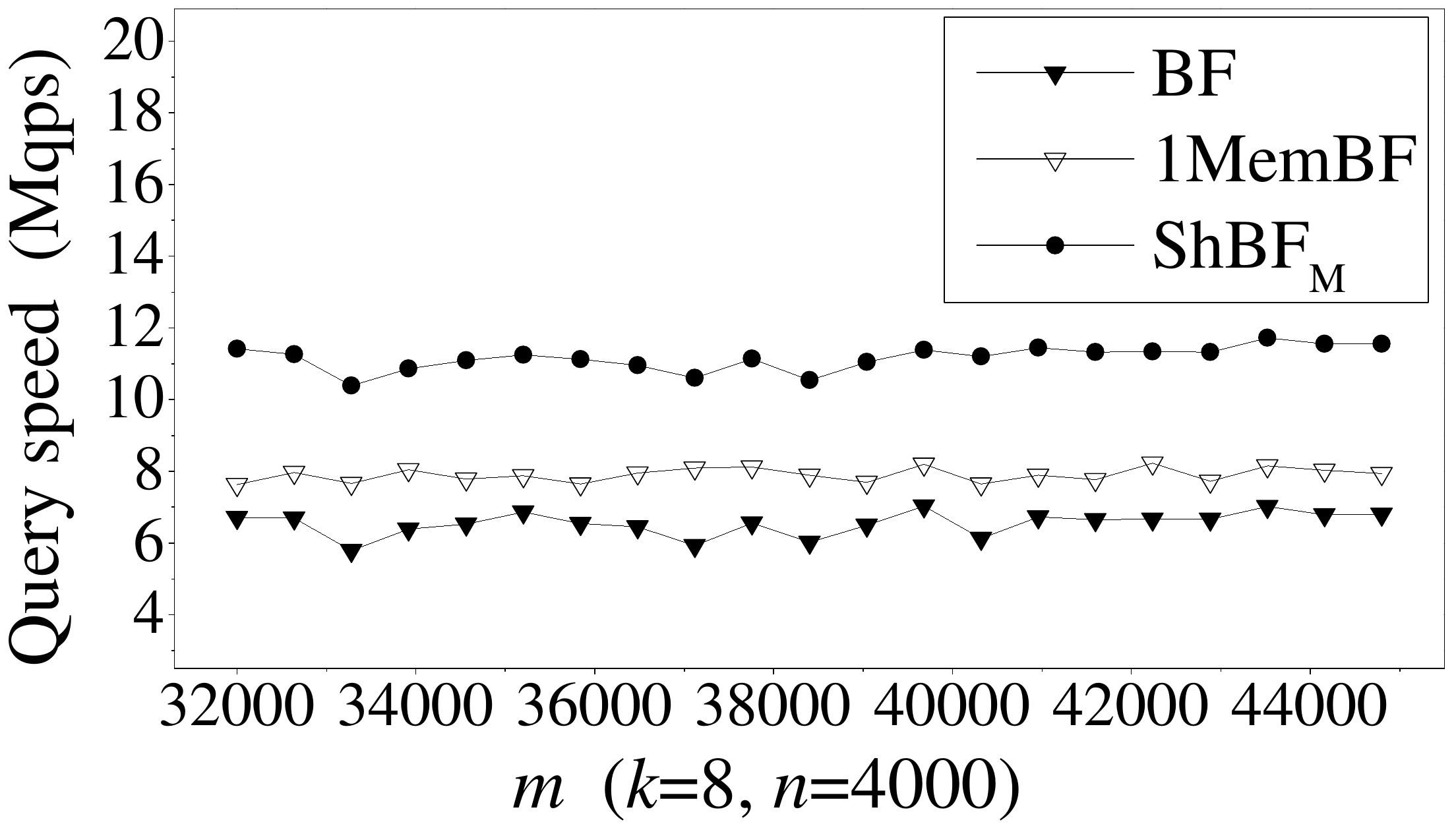}}
\label{eva:Single:Qps:m}}
\prefigcaption
\vspace{-0.1in}
\caption{{\color{greener}Comparison of query processing speeds of ShBF$_\text{M}$, BF, and 1MemBF.}}
\hspace{-0.11in}
\subfigure[Prob. a clear answer]{
{\includegraphics[width=0.32\textwidth]{twoset_FP_k}}
\label{eva:twoset:FP:k}}
%\hspace{-0.11in}
\subfigure[\# memory accesses per query]{
{\includegraphics[width=0.32\textwidth]{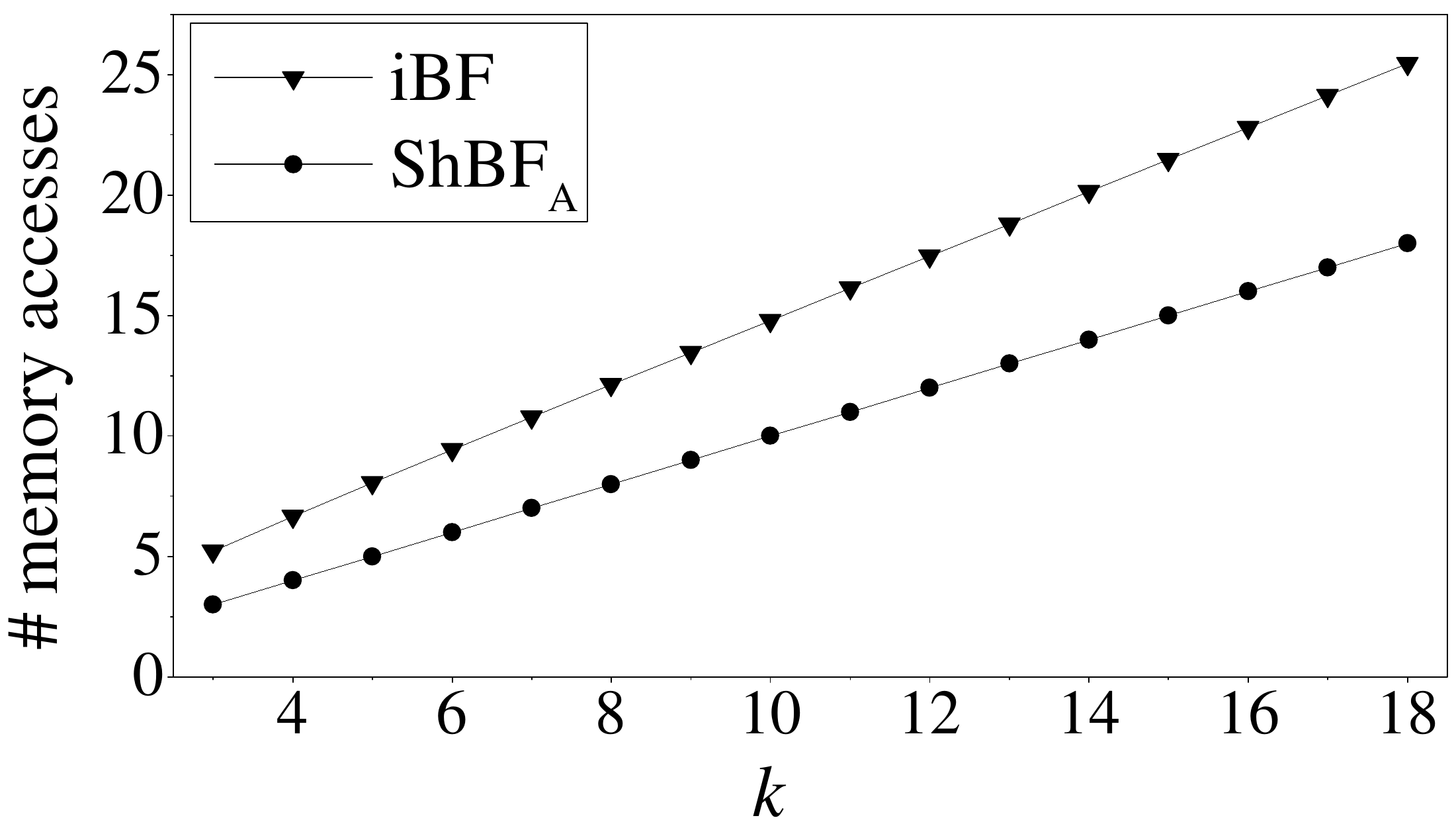}}
\label{eva:twoset:Mem:k}}
%\hspace{-0.11in}
\subfigure[Query processing speed]{
{\includegraphics[width=0.32\textwidth]{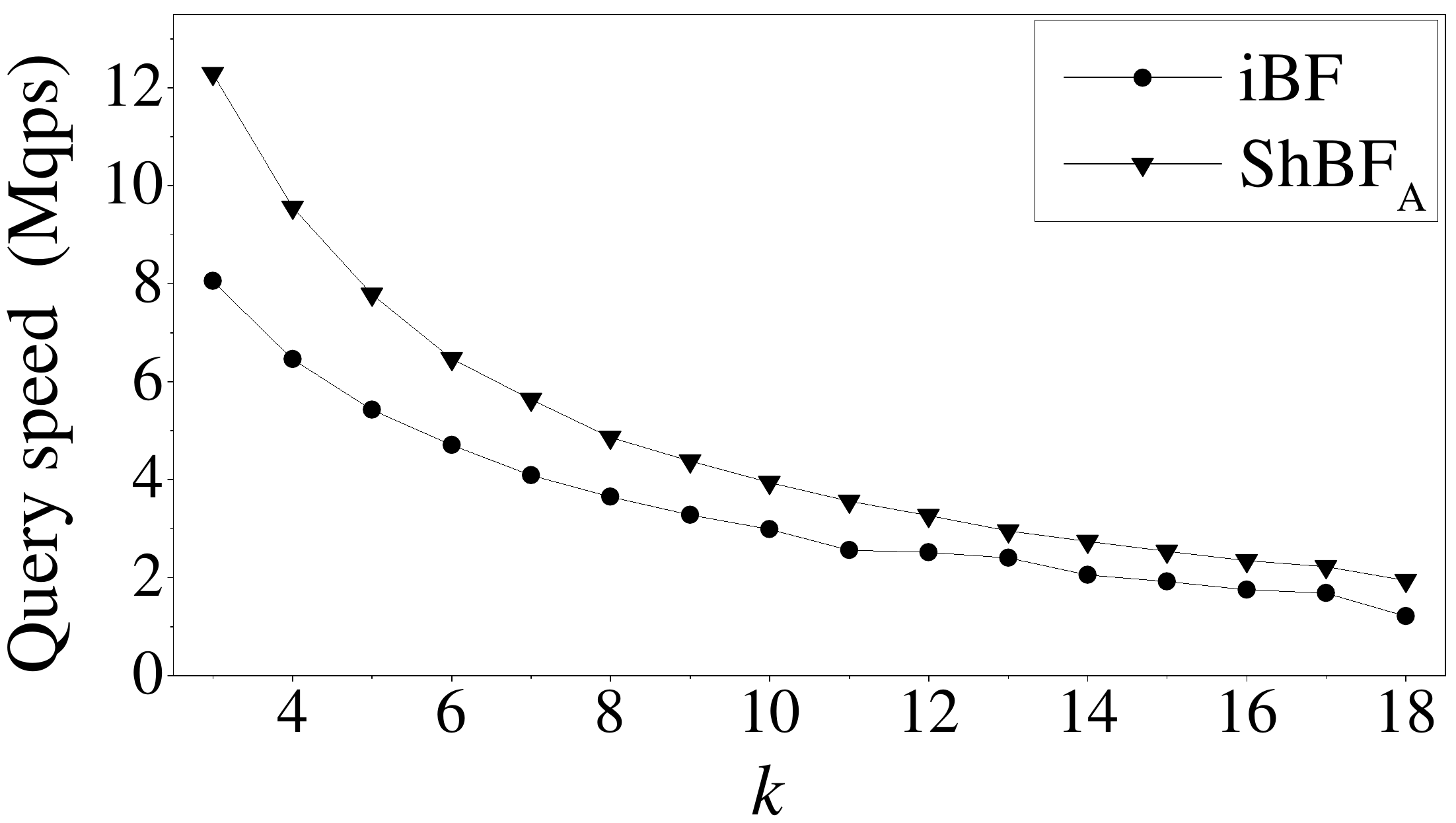}}
\label{eva:twoset:Qps:k}}
\prefigcaption
\vspace{-0.1in}
\caption{{\color{reder}Comparison of ShBF$_\text{A}$ and iBF.}}
\hspace{-0.11in}
\subfigure[Correctness rate (CR)]{
{\includegraphics[width=0.32\textwidth]{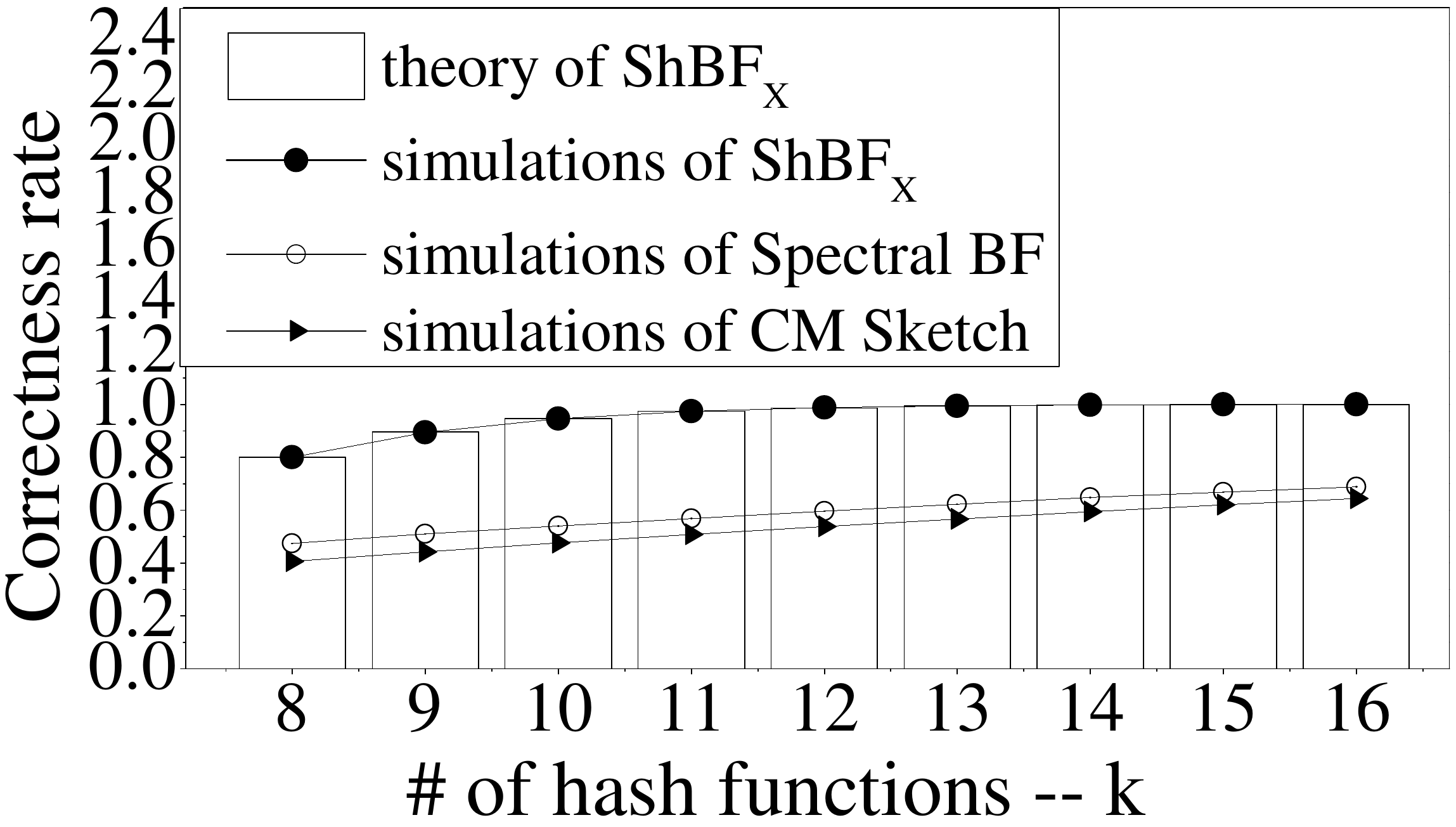}}
\label{eva:ShBF_C:FP}}
%\hspace{-0.11in}
\subfigure[Memory accesses]{
{\includegraphics[width=0.32\textwidth]{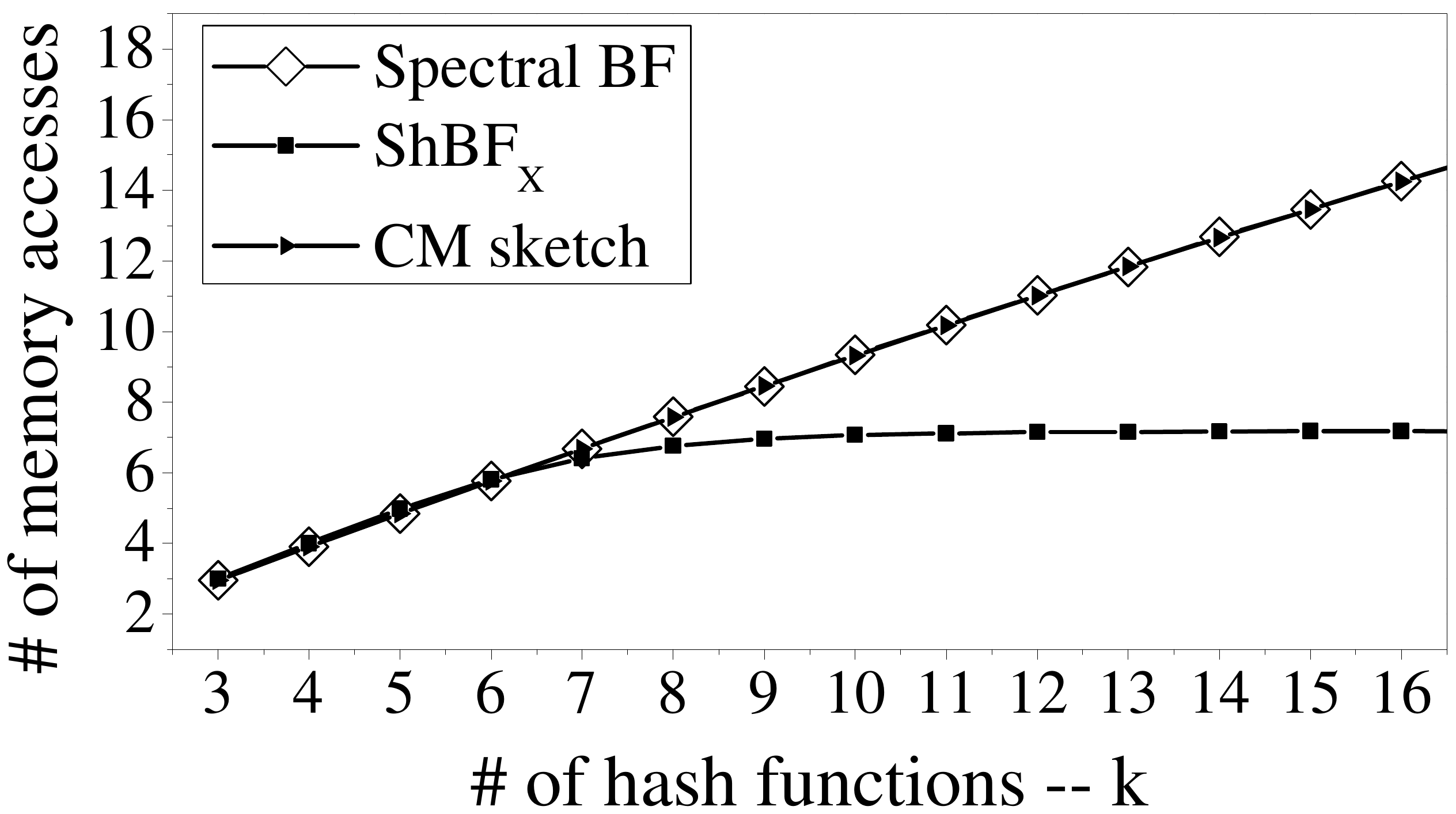}}
\label{eva:ShBF_C:MemAcc}}
%\hspace{-0.11in}
\subfigure[Query processing speed]{
{\includegraphics[width=0.32\textwidth]{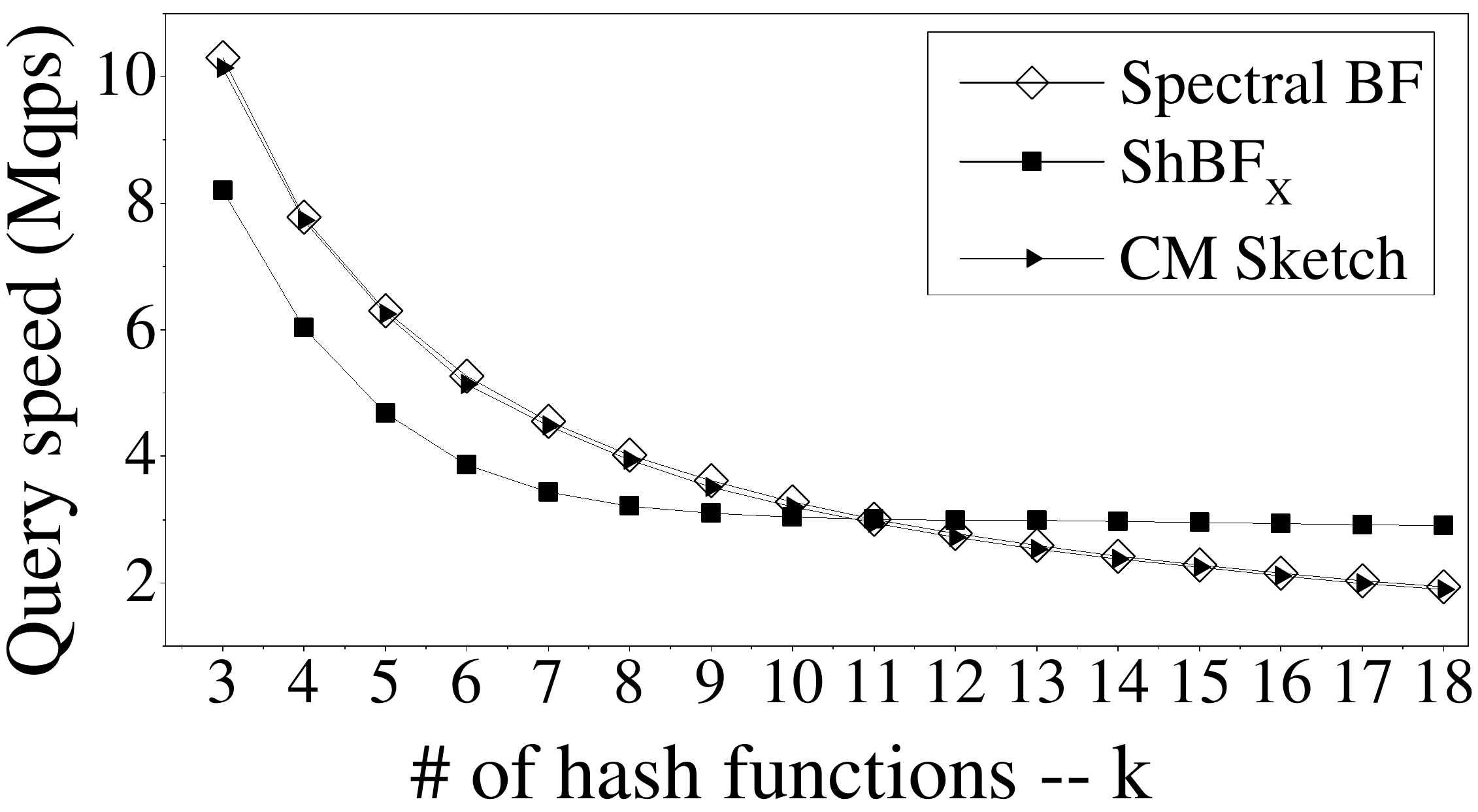}}
\label{eva:ShBF_C:Qps}}
\prefigcaption
\caption{{\color{bluer}Comparison of ShBF$_{\times}$, Spectral BF, and CM Sketch.}}
\end{figure*}

\presec
\section{Performance Evaluation}
\label{sec:evaluation}

%\begin{figure*}[htbp]
%\centering
%\hspace{-0.11in}
%\subfigure[False positive rate]{
%{\includegraphics[width=0.32\textwidth]{}}
%\label{eva:Single:FP}}
%%\hspace{-0.11in}
%\subfigure[Memory accesses]{
%{\includegraphics[width=0.32\textwidth]{}}
%\label{eva:Single:MemAcc}}
%%\hspace{-0.11in}
%\subfigure[Query processing speed]{
%{\includegraphics[width=0.32\textwidth]{}}
%\label{eva:Single:Qps}}
%\prefigcaption
%\caption{Comparison of ShBF$_\text{M}$ with BF and 1MemBF.}
%\end{figure*}

%\begin{figure*}[htbp]
%\centering
%\hspace{-0.11in}
%\subfigure[FPR when querying absent elements]{
%{\includegraphics[width=0.235\textwidth]{}}
%\label{eva:ShBF_A:FP:Absent}}
%%\hspace{-0.11in}
%\subfigure[FPR when querying present elements]{
%{\includegraphics[width=0.235\textwidth]{}}
%\label{eva:ShBF_A:FP:Present}}
%%\hspace{-0.11in}
%\subfigure[Memory accesses]{
%{\includegraphics[width=0.235\textwidth]{}}
%\label{eva:ShBF_A:MemAcc}}
%%\hspace{-0.11in}
%\subfigure[Query processing speed]{
%{\includegraphics[width=0.235\textwidth]{}}
%\label{eva:ShBF_A:Qps}}
%\prefigcaption
%\caption{Comparison of ShBF$_\text{A}$ with iBF and combinatorial BF.}
%\end{figure*}

%%
In this section, we conduct experiments to evaluate our ShBF schemes and side-by-side comparison with state-of-the-art solutions for the three types of set queries.
%
%Specifically, to validate our analytical models, we compare the FPRs calculated by our models with the FPRs observed through experiments.
%%
%For side-by-side comparison, we compare ShBF with state-of-the-art solutions in terms of FPR, number of memory accesses, and query processing speed.

\presub
\subsection{Experimental Setup}
\vspace{0.05in}
We give a brief overview of the data we have used for evaluation and describe our experimental setup.

\textbf{Data set:}
We evaluate the performance of ShBF and state-of-the-art solutions using real-world network traces.
Specifically, we deployed our traffic capturing system on a 10Gbps link of a backbone router.
To reduce the processing load, our traffic capturing system consists of two parallel sub-systems each of which is equipped with a 10G network card and uses \textit{netmap} to capture packets.
Due to high link speed, capturing entire traffic was infeasible because our device could not access/write to memory at such high speed.
Thus, we only captured 5-tuple flow ID of each packet, which consists of source IP, source port, destination IP, destination port, and protocol type.
We stored each 5-tuple flow ID as a 13-byte string, which is used as an element of a set during evaluation.
We collected a total of 10 million 5-tuple flow IDs, out of which 8 million flow IDs are distinct.
\textcolor{bluer}{To further evaluate the accuracy of our proposed schemes, we also generated and used synthetic data sets.
}

\textbf{Hash functions:}
We collected several hash functions from open source web site \cite{openhashWebsite} and tested them for randomness.
Our criteria for testing randomness is that the probability of seeing 1 at any bit location in the hashed value should be 0.5.
To test the randomness of each hash function, we first used that hash function to compute the hash value of the 8 million unique elements in our data set.
Then, for each bit location, we calculated the fraction of times 1 appeared in the hash values to empirically calculate the probability of seeing 1 at that bit location.
Out of all hash functions, 18 hash functions passed our randomness test, which we used for evaluation of ShBF and state-of-the-art solutions.

\textbf{Implementation:}
We implemented our query processing schemes in C++ using Visual C++ 2012 platform.
To compute average query processing speeds, we repeat our experiments 1000 times and take the average.
Furthermore, we conducted all our experiments for 20 different sets of parameters.
As the results across different parameter sets follow same trends, we will report results for one parameter set only for each of the three types of queries.

\textbf{Computing platform:}
We did all our experiments on a standard off the shelf desktop computer equipped with an Intel(R) Core i7-3520 CPU @2.90GHz and 8GB RAM running Windows 7. %The L1 data cache is 64KB, the L2 cache is 512KB.

%\vspace{-0.05in}
\presec
\subsection{ShBF$_{\text{M}}$ -- Evaluation} \postsub
In this section, we first validate the false positive rate of ShBF$_{\text{M}}$ calculated in Equation \eqref{equ:pfpShBFM} using our experimental results.
Then we compare ShBF$_{\text{M}}$ with BF and 1MemBF \cite{lessHashBF}, which represents the prior scheme for answering membership queries, in terms of FPR, the number of memory accesses, and query processing speed.

\vspace{-0.05in}
\presub
\subsubsection{ShBF$_{\text{M}}$ -- False Positive Rate}
\emph{Our experimental results show that the FPR of ShBF$_{\text{M}}$ calculated in Equation \eqref{equ:pfpShBFM} matches with the FPR calculated experimentally.}
For the experiments reported in this section, we set $k=8$, $m=22008$, $\overline{w}=57$, and vary $n$ from 1000 to 1500.
We first insert 1000 elements into ShBF$_{M}$ and then repeatedly insert 20 elements until the total number of elements inserted into ShBF$_{M}$ became 1500.
On inserting each set of 20 elements, we generated membership queries for $7,000,000$ elements whose information was not inserted into ShBF$_{M}$ and calculated the false positive rate.
Figure \ref{eva:Single:FP:n} shows the false positive rate of ShBF$_{\text{M}}$ calculated through these simulations as well as through Equation \eqref{equ:pfpShBFM}.
The bars in Figure \ref{eva:Single:FP:n} represent the theoretically calculated FPR, whereas the lines represent the FPR observed in our experiments.

%\begin{figure*}[b]
%\centering
%\hspace{-0.11in}
%\subfigure[Changing $n$]{
%{\includegraphics[width=0.32\textwidth]{Single_FP_n}}
%\label{eva:Single:FP:n}}
%%\hspace{-0.11in}
%\subfigure[Changing $k$]{
%{\includegraphics[width=0.32\textwidth]{Single_FP_k}}
%\label{eva:Single:FP:k}}
%%\hspace{-0.11in}
%\subfigure[Changing $m$]{
%{\includegraphics[width=0.32\textwidth]{Single_FP_m}}
%\label{eva:Single:FP:m}}
%\vspace{-0.1in}
%\caption{FPR Comparison of ShBF$_\text{M}$ and 1MemBF.}
%\vspace{-0.1in}
%\end{figure*}

%%
\emph{Our results show that the relative error between the FPRs of ShBF$_{M}$ calculated using simulation and theory is less than 3\%}, which is practically acceptable.
Relative error is defined as $|FPR_s-FPR_t|/FPR_t$, where $FPR_s$ is the false positive rate calculated using simulation and $FPR_t$ is the false positive rates calculated using theory. %, respectively.
The relative error of 3\% for ShBF$_{\text{M}}$ is the same as relative error for BF calculated using simulation and the theory developed by Bloom \etal \cite{BF1970}.
Using same parameters, the FPR of 1MemBF is over $5\sim 10$ times that of ShBF$_\text{M}$.
If we increase the space allocated to 1MemBF for storage to 1.5 times of the space used by ShBF$_\text{M}$, the FPR of 1MemBF is still a little more than that of ShBF$_\text{M}$ because hashing $k$ values into one or more words incurs serious unbalance in distributions of $1s$ and $0s$ in the memory, which in turn results in higher FPR.

\textcolor{greener}{\textit{Our results also show that the FPR of ShBF$_\text{M}$ is much smaller than that of 1MemBF when changing $k$ and $m$.}
Figure \ref{eva:Single:FP:k} and Figure \ref{eva:Single:FP:m} show the FPRs of ShBF$_\text{M}$ and 1MemBF for different values of $k$ and $m$, respectively.}

\presub
\subsubsection{ShBF$_{\text{M}}$ -- Memory Accesses}
\emph{Our results show that ShBF$_\text{M}$ answers a membership query using only about half the memory accesses and hash computations and twice as fast compared to BF.}
\textcolor{greener}{Our experiments for evaluating the number of memory accesses per query are similar to that for false positive rate, except that, now we query $2*n$ elements, in which $n$ elements belong to the set.
Figures \ref{eva:Single:Mem:n}, \ref{eva:Single:Mem:k}, and \ref{eva:Single:Mem:m} show the number of memory accesses for ShBF$_{M}$ and standard BF for different values of $n$, $k$, and $m$, respectively.
We also observed from our experiments that standard deviation in the results for ShBF$_{M}$ is also about half of that of standard BF.
}
%
%The standard deviation in the number of memory accesses of ShBF$_{M}$ is also smaller than that of BF, which shows that ShBF$_{M}$ is more stable.???????
%
%This can be seen in Figure \ref{eva:Single:MemAcc}, which plots the average and standard deviation for number of memory accesses for both ShBF$_\text{M}$ and BF. \textcolor{greener}{Results show that the standard deviation of ShBF$_{M}$ is also about half of that of standard BF.}
%
%The number of memory accesses of 1MemBF are constant due to its inherent design.
%\begin{figure*}[b]
%\centering
%\hspace{-0.11in}
%\subfigure[Changing $n$]{
%{\includegraphics[width=0.32\textwidth]{Single_Mem_n}}
%\label{eva:Single:Mem:n}}
%%\hspace{-0.11in}
%\subfigure[Changing $k$]{
%{\includegraphics[width=0.31\textwidth]{Single_Mem_k}}
%\label{eva:Single:Mem:k}}
%%\hspace{-0.11in}
%\subfigure[Changing $m$]{
%{\includegraphics[width=0.32\textwidth]{Single_Mem_m}}
%\label{eva:Single:Mem:m}}
%\vspace{-0.1in}
%\caption{\# memory accesses per query comparison of ShBF$_\text{M}$ with BF.}
%\vspace{-0.1in}
%\end{figure*}

\presub
\vspace{-0.05in}
\subsubsection{ShBF$_{\text{M}}$ -- Query Processing Speed}
\emph{Our results show that ShBF$_\text{M}$ has 1.8 and 1.4 times faster query processing speed compared to BF and 1MemBF, respectively}.
Although 1MemBF only needs one memory access per query, it needs $k+1$ hash functions. BFs are usually small enough to be stored in on-chip memory (such as caches, FPGA block RAM), thus the speed of hash computation will be slower than memory accesses. In contrast, our ShBF$_\text{M}$ reduces both hash computation and memory accesses. In our experiments, using those hashes which passed our randomness test, ShBF$_\text{M}$ exhibits faster query processing speed than that of 1MemBF. It is possible that 1MemBF is faster than ShBF$_\text{M}$ when using simple hash functions, but this probably incurs larger FPR.
\textcolor{greener}{Our experiments for evaluating the query processing speed are similar to that for memory accesses, except that, here we also compare with 1MemBF.
Figures \ref{eva:Single:Qps:n}, \ref{eva:Single:Qps:k}, and \ref{eva:Single:Qps:m} show the query processing speed for ShBF$_{M}$, standard BF, and 1MemBF for different values of $n$, $k$, and $m$, respectively.}

%Figure \ref{eva:Single:Qps} plots the results for query processing speed for both ShBF$_\text{M}$, BF, and 1MemBF.
%%
%We observe from this figure that the query processing speeds of BF, 1MemBF, and ShBF$_{M}$ are around 7.51, 9.56, and 13.51 mega queries per second (Mqps), respectively.
%
%
%For most application, BF is small enough to be held in on-chip memory, thus the hash computation could be slower than memory access speed.
%As can be see in Figure \ref{eva:Single:Qps}, to be continued....
%\begin{figure*}[t]
%\centering
%\hspace{-0.11in}
%\subfigure[Changing $n$]{
%{\includegraphics[width=0.32\textwidth]{Single_Qps_n}}
%\label{eva:Single:Qps:n}}
%%\hspace{-0.11in}
%\subfigure[Changing $k$]{
%{\includegraphics[width=0.32\textwidth]{Single_Qps_k}}
%\label{eva:Single:Qps:k}}
%%\hspace{-0.11in}
%\subfigure[Changing $m$]{
%{\includegraphics[width=0.32\textwidth]{Single_Qps_m}}
%\label{eva:Single:Qps:m}}
%\prefigcaption
%\vspace{-0.1in}
%\caption{query processing speed comparison of ShBF$_\text{M}$ with BF.}
%\vspace{-0.1in}
%\end{figure*}

\presub
\vspace{-0.05in}
\subsection{ShBF$_{\text{A}}$ -- Evaluation}\postsub

In this section, we first validate the probability of a clear answer of ShBF$_{\text{M}}$ calculated in Table \ref{table:compare:ibf} using our experimental results.
Then we compare ShBF$_{\text{A}}$ with iBF in terms of FPR, memory accesses, and query processing speed.

\vspace{-0.05in}
\subsubsection{ShBF$_{\text{A}}$ -- Probability of Clear Answer}

\textit{Our results show that probability of clear answer for ShBF$_{\text{A}}$ calculated in Table \ref{table:compare:ibf} matches with the probability calculated experimentally.}
We performed experiments for both iBF and ShBF$\text{A}$ using two sets with 1 million elements such that their intersection had 0.25 million elements.
The querying elements hit the three parts with the same probability.
While varying the value of $k$, we also varied the value of $m$ to keep the filter at its optimal.
Note that in this case, iBF uses 1/7 times more memory than ShBF$\text{A}$.
We observe from Figure \ref{eva:twoset:FP:k} that the simulation results match the theoretical results, and the average relative error is 0.7\% and 0.004\% for iBF and ShBF$\text{A}$, respectively, which is negligible.
When the value of $k$ reaches 8, the probability of a clear answer reaches 66\% and 99\% for iBF and ShBF$\text{A}$, respectively.

%This is in accord with the formula that $\frac{2}{3}(1-0.5^k)$ and $(1-0.5^k)^2$ in Table \ref{table:compare:ibf}.

%\begin{figure*}[t]
%\centering
%\hspace{-0.11in}
%\subfigure[Changing $n$]{
%{\includegraphics[width=0.32\textwidth]{twoset_FP_k}}
%\label{eva:twoset:FP:k}}
%%\hspace{-0.11in}
%\subfigure[Changing $k$]{
%{\includegraphics[width=0.32\textwidth]{twoset_Mem_k}}
%\label{eva:twoset:Mem:k}}
%%\hspace{-0.11in}
%\subfigure[Changing $m$]{
%{\includegraphics[width=0.32\textwidth]{twoset_Qps_k}}
%\label{eva:twoset:Qps:k}}
%\prefigcaption
%\vspace{-0.1in}
%\caption{query processing speed comparison of ShBF$_\text{M}$ with BF.}
%\vspace{-0.1in}
%\end{figure*}
\vspace{-0.05in}
\subsubsection{ShBF$_{\text{A}}$ -- Memory Accesses}
\emph{Our results show that the average number of memory accesses per query of ShBF$_\text{A}$ is 0.66 times of that of iBF.}
Figure \ref{eva:twoset:Mem:k} shows the number of memory accesses for different values of $k$.
We observed similar trends for different values of $m$ and $n$, but have not including the corresponding figures due to space limitation.

\presub
\vspace{-0.05in}
\subsubsection{ShBF$_{\text{A}}$ -- Query Processing Speed}
\emph{Our results show that the average query processing speed of ShBF$_\text{A}$ is 1.4 times faster than that of iBF.}
Figure \ref{eva:twoset:Qps:k} plots the the query processing speed of ShBF$_\text{A}$ and iBF for different values of $m$.

\vspace{-0.03in}
\subsection{ShBF$_{\times}$ -- Evaluation} \postsub

\vspace{-0.03in}

In this section, we first validate the correctness rate (CR) of ShBF$_{\times}$ calculated in Equation \eqref{equ:correct:ShBF_X}.
Then we compare ShBF$_{\times}$ with spectral BF \cite{spectralBF} \textcolor{bluer}{and CM Sketch \cite{CMsketch}}
%two best prior schemes for answering multiplicity queries,
in terms of CR, number of memory accesses, and query processing speed.
\textcolor{bluer}{The results for CM Sketch and Spectral BF are similar because their methods of recording the counts is similar.}

\presub
\vspace{-0.03in}
\subsubsection{ShBF$_{\times}$ -- Correctness Rate}
\vspace{-0.03in}
\emph{Our results show that the CR of ShBF$_{\times}$ calculated in Equation \eqref{equ:correct:ShBF_X} matches with the CR calculated experimentally.}
\emph{Our results also show that on average, the CR of ShBF$_{\times}$ is 1.6 times and \textcolor{bluer}{1.79 times of that of Spectral BF and CM Sketch, respectively.}}
For the experiments reported in this section, we set $c = 57$, $n=100,000$, and vary $k$ in the range $8\leqslant k\leqslant16$.
For spectral BF and CM sketch, we set use 6 bits for each counter.
\textcolor{greener}{For each value of $k$, as ShBF$_{\times}$ is more memory efficient, we use 1.5 times the optimal memory (\ie, $1.5*nk/ln2$) for all the three filters.}
Figure \ref{eva:ShBF_C:FP} shows the results from our experiments for CR.
We observe from this figure that the CR calculated through experiments matches with the CR calculated theoretically and the relative error is less than 0.08\%, which is negligibly small.
%
%\begin{figure*}[htbp]
%\centering
%\hspace{-0.11in}
%\subfigure[Correctness rate (CR)]{
%{\includegraphics[width=0.32\textwidth]{ShBF_C_FP}}
%\label{eva:ShBF_C:FP}}
%%\hspace{-0.11in}
%\subfigure[Memory accesses]{
%{\includegraphics[width=0.32\textwidth]{ShBF_C_MemAcc}}
%\label{eva:ShBF_C:MemAcc}}
%%\hspace{-0.11in}
%\subfigure[Query processing speed]{
%{\includegraphics[width=0.32\textwidth]{ShBF_C_Qps}}
%\label{eva:ShBF_C:Qps}}
%\prefigcaption
%\caption{Comparison between ShBF$_{\times}$ and Spectral BF.}
%\end{figure*}
\presub
\vspace{-0.04in}
\subsubsection{ShBF$_{\times}$ -- Memory Accesses}
\vspace{-0.04in}
\emph{Our results show that the number of memory accesses of ShBF$_{\times}$ is smaller than that of spectral BF and CM Sketch for $k \geqslant 7$, and almost equal for $k < 7$.}
%
%The standard deviation of the memory accesses of ShBF$_{\times}$ is much smaller than that of spectral BF, which means that ShBF$_{\times}$ is more stable than spectral BF.
%%
Figure \ref{eva:ShBF_C:MemAcc} plots the number of memory accesses of ShBF$_{\times}$, \textcolor{bluer}{CM Sketch}, and spectral BF, calculated from the same experiments that we used to plot Figure \ref{eva:ShBF_C:FP} except that $k$ ranges from 3 to 18.
%\textcolor{bluer}{WHen querying For CM Sketch and spectral BF, the number of memory accesses per query increase linearly for }

\presub
\vspace{-0.03in}
\subsubsection{ShBF$_{\times}$ -- Query Processing Speed}
\vspace{-0.03in}
\emph{Our results show that ShBF$_{\times}$ is faster than spectral BF and \textcolor{bluer}{CM Sketch} when $k\geqslant11$.}
We evaluate the query processing speed of ShBF$_\times$, \textcolor{bluer}{CM Sketch,} and spectral BF using the same parameters as for Figure \ref{eva:ShBF_C:MemAcc}.
Figure \ref{eva:ShBF_C:Qps} plots the query processing speeds of ShBF$_\times$ and spectral BF.
We observe from this figure that when $k>11$, the average query processing speed of ShBF$_\times$ is over 3 Mqps.

\presec
\vspace{-0.03in}
\section{Conclusion} \postsec
\label{sec:conclusion}
\vspace{-0.03in}
The key contribution of this paper is in proposing Shifting Bloom Filter, a general framework to answer a variety of set queries.
We present how to use ShBF to answer three important set queries, \ie, membership, association, and multiplicity queries.
The key technical depth of this paper is in the analytical modeling of ShBF for each of the three types queries, calculating optimal system parameters, and finding the minimum FPRs.
We validated our analytical models through simulations using real world network traces.
Our theoretical analysis and experimental results show that ShBF significantly advances state-of-the-art solutions on all three types of set queries.

\balance
\bibliographystyle{abbrv}
%\bibliography{reference}

\end{document}